\documentclass[10pt,a4paper]{amsart}
\usepackage[margin=20mm]{geometry}
\usepackage[numbers]{natbib}
\usepackage{hyperref}
\usepackage{graphicx}
\usepackage{amssymb,amsmath}
\usepackage{amsthm}
\usepackage{bbm}
\usepackage{stmaryrd}
\usepackage[usenames,dvipsnames]{xcolor}
\usepackage{color}

\input xy
\xyoption{all} 


\numberwithin{equation}{section}

\newtheorem{theorem}{Theorem}[section]
\newtheorem{corollary}[theorem]{Corollary}

\newtheorem{proposition}[theorem]{Proposition}

\theoremstyle{definition}
\newtheorem{definition}[theorem]{Definition}
\newtheorem{remark}[theorem]{Remark}
\newtheorem{example}[theorem]{Example}

\newcommand{\Dleft}{[\hspace{-1.5pt}[}
\newcommand{\Dright}{]\hspace{-1.5pt}]}
\newcommand{\SN}[1]{\Dleft #1 \Dright}

\newcommand{\Id}{\mathbbmss{1}}
\newcommand{\p}{\mbox{\boldmath$\rho$}}

\newcommand{\rmd}{\textnormal{d}}

\DeclareMathOperator{\Vect}{Vect}

\DeclareMathOperator{\Der}{Der}

\font\black=cmbx10 \font\sblack=cmbx7 \font\ssblack=cmbx5 \font\blackital=cmmib10  \skewchar\blackital='177
\font\sblackital=cmmib7 \skewchar\sblackital='177 \font\ssblackital=cmmib5 \skewchar\ssblackital='177
\font\sanss=cmss10 \font\ssanss=cmss8 
\font\sssanss=cmss8 scaled 600 \font\blackboard=msbm10 \font\sblackboard=msbm7 \font\ssblackboard=msbm5
\font\caligr=eusm10 \font\scaligr=eusm7 \font\sscaligr=eusm5  \font\fraktur=eufm10
\font\sfraktur=eufm7 \font\ssfraktur=eufm5 
\font\bsymb=cmsy10 scaled\magstep2
\def\all#1{\setbox0=\hbox{\lower1.5pt\hbox{\bsymb
       \char"38}}\setbox1=\hbox{$_{#1}$} \box0\lower2pt\box1\;}
\def\exi#1{\setbox0=\hbox{\lower1.5pt\hbox{\bsymb \char"39}}
       \setbox1=\hbox{$_{#1}$} \box0\lower2pt\box1\;}

\def\tx#1{{\fam0\relax#1}}

\newfam\bifam
\textfont\bifam=\blackital \scriptfont\bifam=\sblackital \scriptscriptfont\bifam=\ssblackital

\newfam\blfam
\textfont\blfam=\black \scriptfont\blfam=\sblack \scriptscriptfont\blfam=\ssblack

\newfam\bbfam
\textfont\bbfam=\blackboard \scriptfont\bbfam=\sblackboard \scriptscriptfont\bbfam=\ssblackboard

\newfam\ssfam
\textfont\ssfam=\sanss \scriptfont\ssfam=\ssanss \scriptscriptfont\ssfam=\sssanss
\def\sss#1{{\fam\ssfam\relax#1}}

\newfam\clfam
\textfont\clfam=\caligr \scriptfont\clfam=\scaligr \scriptscriptfont\clfam=\sscaligr

\newfam\frfam
\textfont\frfam=\fraktur \scriptfont\frfam=\sfraktur \scriptscriptfont\frfam=\ssfraktur

\def\hpb#1{\setbox0=\hbox{${#1}$}
    \copy0 \kern-\wd0 \kern.2pt \box0}
\def\vpb#1{\setbox0=\hbox{${#1}$}
    \copy0 \kern-\wd0 \raise.08pt \box0}

\def\pmb#1{\setbox0\hbox{${#1}$} \copy0 \kern-\wd0 \kern.2pt \box0}
\def\pmbb#1{\setbox0\hbox{${#1}$} \copy0 \kern-\wd0
      \kern.2pt \copy0 \kern-\wd0 \kern.2pt \box0}
\def\pmbbb#1{\setbox0\hbox{${#1}$} \copy0 \kern-\wd0
      \kern.2pt \copy0 \kern-\wd0 \kern.2pt
    \copy0 \kern-\wd0 \kern.2pt \box0}
\def\pmxb#1{\setbox0\hbox{${#1}$} \copy0 \kern-\wd0
      \kern.2pt \copy0 \kern-\wd0 \kern.2pt
      \copy0 \kern-\wd0 \kern.2pt \copy0 \kern-\wd0 \kern.2pt \box0}
\def\pmxbb#1{\setbox0\hbox{${#1}$} \copy0 \kern-\wd0 \kern.2pt
      \copy0 \kern-\wd0 \kern.2pt
      \copy0 \kern-\wd0 \kern.2pt \copy0 \kern-\wd0 \kern.2pt
      \copy0 \kern-\wd0 \kern.2pt \box0}


\mathchardef\za="710B  
\mathchardef\zb="710C  
\mathchardef\zg="710D  
\mathchardef\zd="710E  
\mathchardef\zve="710F 
\mathchardef\zz="7110  
\mathchardef\zh="7111  
\mathchardef\zvy="7112 
\mathchardef\zi="7113  
\mathchardef\zk="7114  
\mathchardef\zl="7115  
\mathchardef\zm="7116  
\mathchardef\zn="7117  
\mathchardef\zx="7118  
\mathchardef\zp="7119  
\mathchardef\zr="711A  
\mathchardef\zs="711B  
\mathchardef\zt="711C  
\mathchardef\zu="711D  
\mathchardef\zvf="711E 
\mathchardef\zq="711F  
\mathchardef\zc="7120  
\mathchardef\zw="7121  
\mathchardef\ze="7122  
\mathchardef\zy="7123  
\mathchardef\zf="7124  
\mathchardef\zvr="7125 
\mathchardef\zvs="7126 
\mathchardef\zf="7127  
\mathchardef\zG="7000  
\mathchardef\zD="7001  
\mathchardef\zY="7002  
\mathchardef\zL="7003  
\mathchardef\zX="7004  
\mathchardef\zP="7005  
\mathchardef\zS="7006  
\mathchardef\zU="7007  
\mathchardef\zF="7008  
\mathchardef\zW="700A  
\mathchardef\zC="7009  

\newcommand{\be}{\begin{equation}}
\newcommand{\ee}{\end{equation}}

\newcommand{\bea}{\begin{eqnarray}}
\newcommand{\eea}{\end{eqnarray}}
\def\*{{\textstyle *}}
\newcommand{\R}{{\mathbb R}}

\newcommand{\Z}{{\mathbb Z}}

\newcommand{\s}{{\textstyle *}}







\def\Sec{\sss{Sec}}
\def\Vect{\sss{Vect}}




\def\sT{{\sss T}}

\def\xi{\tx{i}}


\def\s*{{\scriptstyle *}}

\def\cO{\mathcal{O}}


\newcommand{\beas}{\begin{eqnarray*}}
\newcommand{\eeas}{\end{eqnarray*}}

\def\half{\frac{1}{2}}

\title{Homological sections of Lie algebroids}

   \author{Andrew James Bruce} 
   \address{Current affiliation: Department of Mathematics, The Computational Foundry,
Swansea University Bay Campus, Fabian Way, Swansea, SA1 8EN, United Kingdom \newline \indent Research conducted at the  Department of Mathematics, University of Luxembourg, Maison du Nombre 6, avenue de la Fonte, 
L-4364 Esch-sur-Alzette, Luxembourg}  
   \email{andrewjamesbruce@googlemail.com}
  

\date{\today}
 \begin{document}

\begin{abstract}
We examine  Lie (super)algebroids equipped with a \emph{homological section}, i.e., an odd section that `self-commutes', we refer to such Lie algebroids as \emph{inner Q-algebroids}: these provide natural examples of suitably ``superised'' Q-algebroids in the sense of Mehta. Such Lie algebroids are a natural generalisation of Q-manifolds and Lie superalgebras equipped with a homological element. Amongst other results,  we show that, via the derived bracket formalism, the space of  sections of an inner Q-algebroid comes equipped with an odd Loday--Leibniz bracket.   \par
\smallskip\noindent
{\bf Keywords:} 
Q-manifolds;~ Lie algebroids;~derived brackets;~Loday--Leibniz algebras \par
\smallskip\noindent
{\bf MSC 2010:} 53D17;~58A50;~16E45;~17A32;~17B70.
\end{abstract}

 \maketitle

\setcounter{tocdepth}{2}
\vspace{-30pt}

\section{Introduction} 
Q-manifolds  are supermanifolds equipped with an odd vector field, usually denoted as $Q$ and referred to as a homological vector field, that satisfies $2 Q^2 = [Q,Q] =0$ (see \cite{Alexandrov:1997,Schwarz:1993,Shander:1980,Vaintrob:1996}). As we have a $\Z_2$-graded Lie bracket on the module of vector fields this condition is non-trivial.  Such supermanifolds appear in physics via the BRST formalism  \cite{Becchi:1976,Tyutin:1975}, the homological vector field is the BRST differential. In mathematics, Q-manifolds offer a neat  formulation of  Lie algebroids and some of their relatives such as Courant algebroids (see \cite{Roytenberg:2002,Vaintrob:1997}). Formulating classical notions in terms of Q-manifolds often allows for a clear geometric picture to emerge, for example a simple notion of Lie algebroid morphisms is obtained via this approach.  Another example of the power of employing Q-manifolds is Kontsevich's  proof of the formality theorem, which is essential in his proof that all Poisson manifolds admit a deformation quantisation \cite{Kontsevich:2003}. There are many other instances where the `Q-philosophy' has shed light on difficult questions, a \emph{very} incomplete list includes \cite{Bonavolonta:2013, Dotsenko:2016, Khudaverdyan:2014,Voronov:2002, Voronov:2012}.  The classical understanding of a Lie algebroid, due to Pradines \cite{Pradines:1974}, is as a vector bundle together with a Lie bracket on its space of sections and an anchor map from the vector bundle to the tangent bundle of the base manifold, that satisfy some natural compatibility conditions (see Definition \ref{def:LieAlg}). A Lie algebroid can be seen as a generalisation of a tangent bundle and its sections as a generalisation of vector fields. The general mantra here is that the tangent bundle of a manifold can be replaced with a Lie algebroid in all the standard constructions of differential geometry. Classical Lie algebroids generalise verbatim to the setting of supermanifolds; in practice, this is just taking care of the extra sign factors that arise. Thus, we have the ``superalgebroid creed'' -  \emph{notions and constructions that involve the tangent bundle of a supermanifold generalise to a Lie superalgebroid}. In particular, one should be able to replace vector fields on a supermanifold with sections of a Lie superalgebroid.  \par 
In this paper, we give the notion of a \emph{homological section}  of a Lie superalgebroid (see Definition \ref{def:HomSec}), i.e., an odd section that ``self-commutes'' with respect to the ($\Z_2$-graded) Lie bracket, and examine some direct consequences thereof. This condition is, in general, just as the case for homological vector fields, non-trivial. Lie superalgebroids equipped with such a section, \emph{{ inner Q-algebroids}}\footnote{ we refrain from using the term ``dg Lie algebroid'' as we will not consider objects with an additional $\Z$ or $\mathbb{N}$ grading.} as we refer to them, are direct generalisations of Q-manifolds. They should also simultaneously be viewed as the oidification (or horizontal categorification) of Lie superalgebras equipped with a homological element (see \cite{Lebedev:2005} and Example \ref{exp:LieAlgQ}).  In essence, { inner Q-algebroids} are the `many object' version of inner differential Lie superalgebras.  { Inner Q-algebroids are natural examples of ``superised''  Q-algebroids in the sense of Mehta (see \cite{Mehta:2006,Mehta:2009}). Essentially, if the de Rham complex of a Lie algebroid comes with an additional compatible differential then we have a Q-algebroid. For the case at hand, the differential is an inner differential generated by the homological section and the Lie superbracket. A large motivating factor for the study of Q-algebroids is Mackenzie's theory of double Lie algebroids and related structures coming from Poisson geometry. Indeed, we will show a tight relation between inner Q-algebroids and triangular Lie bialgebroids. Moreover, given how homological vector fields have proven extremely versatile in algebra and geometry we conjecture that homological sections may similarly be useful.}  \par 
We remind the reader that sections of an integrable Lie algebroid correspond to left-invariant vector fields (i.e., vector fields tangent to the source fibres) on the integrating Lie groupoid. Thus, assuming the Lie superalgebroid is integrable, then a homological section corresponds to a left-invariant homological vector field on the integrating Lie supergroupoid. An obvious example here would be the pair supergroupoid of a Q-manifold.  It is fair to say that Lie supergroupoids and compatible geometrics structures on them are not well-studied objects. Mehta, to our knowledge, was the first to consider multiplicative homological vector fields on graded groupoids (see \cite{Mehta:2009b}). In particular, he established the correspondence between Mackenzie's $\mathcal{LA}$-groupoids (see \cite{Mackenzie:1992}) and, what are known as Q-groupoids.  We will not consider Lie supergroupoids  past this point in this paper. \par 
It is well known that many deformation problems are controlled by homological elements of a Lie superalgebra, or rather the cohomology they define. This observation goes back to at least De Wilde \& Lecomte \cite{DeWilde:1988} in the context of deformation quantisation. In particular, they looked for graded Lie algebras such that there is a one-to-one correspondence between the algebraic structures they were interested in and degree one homological elements of the graded Lie algebra. The idea that algebraic notions can be encoded in this way has been generalised in many directions, including homotopy algebras over any quadratic Koszul operad (see Ginzburg \& Kapranov \cite{Ginzburg:1994}). A particular  realisation for this is Voronov's construction of $L_\infty$-algebras via higher derived brackets (see \cite{Voronov:2005}).  Homotopy Leibniz algebras were also given in similar terms by Ammar \& Poncin (see \cite{Ammar:2010}). \par 
Another motivation for this work comes from Poisson/Schouten (super)geometry. In particular, in the classical setting, the canonical Poisson structure on $\sT^* M$, the canonical Schouten structure on $\Pi \sT^* M$, and the de Rham differential (a homological vector field) on  $\Pi \sT M$ are equivalent manifestations of the smooth structure on $M$. Poisson and Schouten structures on Lie superalgebroids, including their higher versions have been studied (for the higher case see \cite{Bruce:2010}). These structures can also themselves be encoded in particular homological vector fields. In order to complete this picture it is natural to consider homological sections of Lie superalgebroids. \par 
From this point on we will generally drop the prefix `super' and understand, unless otherwise explicitly stated, that all objects are $\Z_2$-graded. For example, by a manifold we will generally mean a supermanifold, and by a Lie algebroid we will generally mean a Lie superalgebroid. In this paper we establish, amongst other results, the following:
\begin{enumerate}
\item  As a direct result of the definition of a homological section, we observe that we have a natural bi-complex on Lie algebroid forms, and accordingly, { inner Q-algebroids}  give a natural class of, suitably ``superised'' Q-algebroids (see Proposition \ref{prop:IQAgebroid}).
 It should be remarked that other natural class of Q-algebroids are Mackenzie's double Lie algebroids (see \cite{Mackenzie:1992} and Voronov \cite{Voronov:2012}).  Thus, new examples of Q-algebroids can be constructed as { inner Q-algebroids}.
 \item { We show that from a homological section one can build a Poisson structure on the Lie algebroid. Thus, we have a triangular Lie bialgebroid in the sense of Mackenzie and Xu (see \cite{Mackenzie:1994}), see Theorem \ref{thm:BiAlg}. }
 \item  A homological section of a Lie algebroid clearly induces a differential on the Lie superalgebra of sections, see Proposition \ref{prop:DiffLieAlg}. In turn, this implies, via the derived bracket formalism (see \cite{KosmannSchwarzbach:1996,KosmannSchwarzbach:2004}), that the sections of an { inner Q-algebroid} come equipped with an odd Loday--Leibniz bracket, see Proposition \ref{trm:OddLodLei}. It should be noted that we do not obtain a new  Lie algebroid structure (upto a shift in parity) as we do not have the required symmetry. Moreover, the right and left Leibniz rules are different. 
\item If the modular class of the Lie algebroid vanishes, then the modular class associated with the Lie derivative of a homological section also vanishes, see Proposition \ref{prop:ModClass}. This implies that one can find a Berezin volume that is both invariant under the associated de Rham differential and the Lie derivative with respect to the homological section, see Corollary \ref{cor:InvBerVol}.
\end{enumerate}
\noindent \textbf{Preliminaries.} We assume that the reader is familiar with the basic theory of supermanifolds (see \cite{Berezin:1975}), i.e., a supermanifold is a locally super-ringed space that is locally isomorphic to $\mathbb{R}^{n|m} := \big (\R^{n}, C^{\infty}_{\R^n}(-)\otimes \Lambda(\theta^{1}, \cdots \theta^{m}) \big)$. Here, $\Lambda(\theta^{1}, \cdots \theta^{m})$ is the Grassmann algebra (over $\R$) with $m$ generators and $C^{\infty}_{\R^n}(-)$ is the sheaf of smooth function on $\R^n$. We will denote a supermanifold by $M = (|M|, \cO_M)$, where $|M|$ is the underlying reduced manifold and $\cO_M$ is a sheaf of superalgebras. We denote the Grassmann parity of objects by `tilde', i.e., $\widetilde{O} \in \Z_2$. We will denote the algebra of global sections of a supermanifold as $C^\infty(M) :=  \cO_M(|M|)$, and refer to then as functions. Morphisms of supermanifolds are morphisms of  super-ringed spaces. That is, a morphism $\phi : M \rightarrow N$ consists of a pair $ \phi = (|\phi|, \phi^*  )$, where $|\phi| : |M| \rightarrow |N|$ is a continuous map (in fact, smooth) and  {$\phi^*: \cO_N \rightarrow |\phi|_* \cO_M$  is a morphism of sheaves of superalgebras}. Given any point on $|M|$ we can always find a `small enough' open neighbourhood $|U|\subseteq |M|$ such that we can  employ local coordinates $x^{a} := (x^{\mu} , \theta^{i})$. One can work with local coordinates in more-or-less in the same way as one does on manifolds when it comes to describing geometric objects and morphisms. The $C^\infty(M)$-module of vector fields we denote as $\Vect(M)$. \par 
{ Recall that a vector bundle in the category of supermanifolds $\tau : E \rightarrow M$ is a supermanifold equipped with an adapted atlas with charts admitting coordinates of the form  $(x^a, y^\alpha)$, the parity of these are denoted  $\widetilde{x^a} = \widetilde{a}$ and $\widetilde{y^\alpha} = \widetilde{\alpha}$.  In particular, a vector bundle is locally isomorphic the product $U \times \R^{r|s}$, where $U:= (|U|, \cO_M|_{|U|})$ and $|U| \subset |M|$ is open. The admissible changes of coordinates are of the form (using the standard abuses of notation)
\begin{align}\label{eqn:CoordChan}
& x^{a'} = x^{a'}(x), && y^{\alpha'} = y^\beta T_{\beta}^{\,\, \alpha'}(x)\,
\end{align}
where $T_\beta^{\,\, \alpha'}$ are even invertible matrices (with entries being local sections of $M$). 
}
 The space of sections, both even and odd, of a (geometric) vector bundle $\tau: E \rightarrow M$ we will denote as $\Sec(E)$. The even sections can be understood as `geometric sections' of $E$, while odd sections as `geometric sections' of $\Pi E$. { Here $\Pi E$ is the parity reversed bundle, i.e., the parity of the fibre coordinates is shifted with respect to the original vector bundle.  Thus, even sections are morphisms of supermanifolds $s :  M \rightarrow E$ such that $s\circ \tau = \Id_M$. Similarly, odd sections are morphisms of supermanifolds $s : M \rightarrow E$ such that $s\circ \tau^\Pi = \Id_M$.} We remark that we have a supermanifold version of the smooth Serre-Swan theorem (for a sketch of the proof see \cite{Balduzzi:2011}), which essentially states the equivalence of geometric vector bundles and algebraic vector bundles (i.e., sheaves of locally free $\cO_M$-modules).  \par
We also take it for granted that the reader has a grasp of the theory of Lie algebroids including their numerous examples. Our standard reference for Lie groupoids and Lie algebroids is the book by Mackenzie \cite{Mackenzie:2005}. For completeness, we will give the definition of a Lie algebroid (adapted to the setting of supermanifolds). 
\begin{definition}[Pradines \cite{Pradines:1974}]\label{def:LieAlg}
A \emph{Lie algebroid} is a triple $(A, [-,-], \rho)$, where $\pi : A \rightarrow M$ is a vector bundle, $[-,-]$ is a Lie bracket on the vector space $\Sec(A)= \Sec_0(A)\oplus \Sec_1(A)$ { that preserves the Grassmann parity}, referred to as the \emph{Lie algebroid bracket}, and  a linear map $\rho: \Sec(A) \rightarrow \Vect(M)$, referred to as the \emph{anchor}, that satisfies the Leibniz rule
$$[u,f\,v] = \rho(u)f \, v  + (-1)^{\widetilde{u} \widetilde{f}} \, f \, [u,v],$$
with $u$ and $v \in \Sec(A)$ and $f \in C^\infty(M)$. 
\end{definition}
It should be noted that one can deduce that the anchor is in fact a homomorphism of Lie algebras, i.e., $\rho[u,v] = [\rho(u), \rho(v)]$. { In a more algebraic language, if a vector bundle $A$ is a Lie algebroid then $\big(C^\infty(M), \Sec(A) \big)$ is a Lie--Rinehart algebra (see \cite{Rinehart:1963}).} By picking a `small enough' open submanifold $U = (|U|, \cO_M|_{|U|})$ such that one can  employ local coordinates $x^a$ together with a local basis $\{ t_\alpha\}$ of $\Sec(A)$ and defining
\begin{align*}
[t_\alpha, t_\beta] = (-1)^{\widetilde{\beta}} \, Q_{\alpha \beta}^\gamma(x)t_\gamma, && \rho(t_\alpha) = Q_\alpha^a(x)\frac{\partial}{\partial x^a},
\end{align*}
the Lie algebroid bracket { for arbitrary sections $u = u^\alpha(x)t_\alpha$ and $v = v^\alpha(x)t_\alpha$} can (locally) be written as
\begin{equation}\label{eqn:LocForBra}
[u,v] = \left( u^\alpha(x)Q^a_\alpha(x)\frac{\partial v^\gamma(x)}{\partial x^a} - (-1)^{\widetilde{u} \widetilde{v}} \, v^\alpha(x)Q^a_\alpha(x)\frac{\partial u^\gamma(x)}{\partial x^a}  - (-1)^{\widetilde{\alpha}(\widetilde{v}+1)} \, u^\alpha(x) v^\beta(x) Q^\gamma_{\beta \alpha}(x)\right )t_\gamma\,.
\end{equation}
For brevity, we will often just write $A$ for a Lie algebroid when no confusion can arise. It is well-known that Lie algebroids have a very economic description in terms of Q-manifolds \`{a} la Vaintrob \cite{Vaintrob:1997}.
The { parity reversed bundle} $\Pi A$ can then be defined as the supermanifold equipped with local coordinates $(x^a, \zx^\alpha)$, where the parity of the fibre coordinates has now been shifted, i.e., $\widetilde{\zx^\alpha} = \widetilde{\alpha} +1$. The admissible changes of coordinates are the same as \eqref{eqn:CoordChan} upon replacing $y$ with $\zx$. As the  changes of the fibre coordinates on $A$ are linear, this replacement is valid and the resulting supermanifold $\Pi A$ is well-defined. The parity reversed bundle  {-  just as any vector bundle does -} comes with a natural $\mathbb{N}$-grading, which we refer to as \emph{weight}, defined by assigning weight zero to the coordinates on the base manifold and weight one to the fibre coordinates. This assignment of weight is independent of the Grassmann parity of the coordinates (see \cite{Voronov:2002} for further details).  A Lie algebroid structure on a vector bundle $A$ is equivalent to a weight one homological vector field on $\Pi A$.  By minor abuse of language, we will also refer to $(\Pi A, Q)$ as a Lie algebroid. The bracket and anchor can be recovered using the derived bracket formalism  of Kosmann-Schwarzbach \cite{KosmannSchwarzbach:1996,KosmannSchwarzbach:2004} (also see Voronov \cite{Voronov:2005}). \par 
The general set-up of the derived bracket formalism (slightly adapted to our current needs) is as follows. Given any differential Lie superalgebra $(\mathfrak{g}, [-,-], \delta)$ we have a \emph{derived bracket} on $\mathfrak{g}$  given by  $(a,b) :=  (-1)^{\widetilde{a}} [\delta a, b]$. This bracket is Grassman odd and in general \emph{not} graded skewsymmetric. It does however satisfy an appropriate version of the Jacobi identity, and thus we have an odd Loday--Leibniz bracket on $\mathfrak{g}$. In practice, the differential is often an inner derivation. That is, $\delta = [q, -]$ for some odd element $q$. In order for this to be a differential, i.e., $\delta^2 =0$, it is clear that we must have $[q,q]=0$. In other words, $q$ must be homological. \par 
We { adapt} the definition of a Q-algebroid as given by Mehta \cite{Mehta:2006,Mehta:2009} to the setting of supermanifolds (his definitions were originally given in the category of $\Z$-graded manifolds). Essentially we view a Q-algebroid as a Lie algebroid equipped with a compatible homological vector field.  In a more categorical language, a Q-algebroid is a Lie algebroid in the category of Q-manifolds, and vice versa. We remark that the existence of an  { inner Q-algebroid} structure implies the existence of a Q-algebroid structure (see Proposition \ref{prop:IQAgebroid}). 
\begin{definition}[Mehta \cite{Mehta:2006,Mehta:2009}]\label{def:QAlgb}
A Lie algebroid $(\Pi A,Q)$ equipped with a homological vector field of weight zero, $\Xi \in \Vect(\Pi A)$ that commutes with $Q$, i.e.,  $[Q, \Xi]=0$, is referred to as a \emph{Q-algebroid}.
\end{definition}
By definition, a homological vector field is odd and so the condition that $[\Xi, \Xi]=0$ is non-trivial. Furthermore, note that the weight zero condition ensures that $\Xi$ is projectable and that its projection is also homological.  In local coordinates $(x^a, \zx^\alpha)$ on $\Pi A$ the two homological vector fields are of the form 
\begin{align*}
& Q = \zx^\alpha Q^a_\alpha(x)\frac{\partial}{\partial x^a} + \frac{1}{2}\zx^\alpha \zx^\beta Q^\gamma_{\beta \alpha}(x)\frac{\partial}{\partial \zx^\gamma}\,,
&& \Xi = \Xi^a(x) \frac{\partial}{\partial x^a} + \zx^\alpha \Xi_\alpha^{\,\,\beta}(x) \frac{\partial}{\partial \zx^\beta}\,.
\end{align*}

\section{Homological sections and  inner Q-algebroids}
\subsection{ Inner Q-algebroids: definition and examples}
We now proceed to give the main definition of this paper and explore some direct consequences thereof. 
\begin{definition}\label{def:HomSec}
Let $(A, [-,-], \rho)$ be a Lie algebroid. A \emph{homological section} of $A$ is an odd section $q \in \Sec(A)$ that  {`self-commutes', i.e., $[q,q]=0$}. An \emph{{ inner Q-algebroid}} is a Lie algebroid equipped with a distinguished  homological section. 
\end{definition}
We will denote an { inner Q-algebroid} as $(A,q)$ where the Lie algebroid bracket and anchor map are implied. For brevity, we will write { IQA} for an { inner Q-algebroid}. Just as in the case of homological vector fields, the condition $[q,q] =0$ is, in general, not automatic. 
\begin{proposition}\label{prop:DiffLieAlg}
Let $(A,q)$ be an { IQA} and set $\delta := [q,-]$. Then { $\delta^2 =0$, and $\delta$ is a derivation over the Lie algebroid bracket.}
\end{proposition}
\begin{proof}
It is clear that as we have a homological section  $\delta^2=0$. Thus,  we need only check that $\delta$ acts as a derivation over the Lie algebroid bracket. However, this is obvious in light of the Jacobi identity, that is, we have an inner derivation. Explicitly,
$$\delta [u,v] = [q,[u,v]] = [[q,u],v] +(-1)^{\widetilde{u}}\, [u, [q,v]]
 = [\delta u, v] +(-1)^{\widetilde{u}} \, [u, \delta v].$$
\end{proof}
\begin{proposition}\label{prop:Qman}
Let $(A,q)$ be an { IQA} over $M$, i.e., $\pi: A \rightarrow M$, and set $Q_q := \rho(q) \in \Vect(M)$. Then $(M, Q_q)$ is a Q-manifold.
\end{proposition}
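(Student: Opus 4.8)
The plan is to check directly that $Q_q = \rho(q)$ meets the two requirements in the definition of a Q-manifold, namely that it is an odd vector field and that $2 Q_q^2 = [Q_q, Q_q] = 0$. The proposition is essentially a transport of the homological condition on $q$ along the anchor, so the argument should be short.

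First I would settle the parity. The anchor $\rho : \Sec(A) \to \Vect(M)$ is parity-preserving: this can be read off either from the local expression $\rho(t_\alpha) = Q^a_\alpha(x)\,\partial_{x^a}$ together with the assignment of parities, or more invariantly from the Leibniz rule in Definition \ref{def:LieAlg}, where the term $\rho(u)f\,v$ forces $\widetilde{\rho(u)} = \widetilde{u}$. Since $q$ is odd by hypothesis, $Q_q = \rho(q)$ is an odd element of $\Vect(M)$, and for an odd vector field one has $[Q_q, Q_q] = 2 Q_q^2$, so it suffices to establish $[Q_q, Q_q] = 0$.

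The heart of the argument is the fact, recorded immediately after Definition \ref{def:LieAlg}, that the anchor is a homomorphism of Lie (super)algebras, $\rho[u,v] = [\rho(u), \rho(v)]$. Applying this with $u = v = q$ and invoking the homological section condition $[q,q] = 0$ gives
$$[Q_q, Q_q] = [\rho(q), \rho(q)] = \rho([q,q]) = \rho(0) = 0,$$
whence $Q_q^2 = 0$ and $(M, Q_q)$ is a Q-manifold.

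I do not anticipate a genuine obstacle here. The only point demanding a little care is the use of the homomorphism property $\rho[u,v] = [\rho(u),\rho(v)]$: if one prefers a self-contained argument one can recover it from the Jacobi identity for the Lie algebroid bracket together with the Leibniz rule, or, alternatively, one can verify $[Q_q, Q_q] = 0$ by a direct computation in local coordinates using \eqref{eqn:LocForBra} and $\rho(q) = q^\alpha Q^a_\alpha\,\partial_{x^a}$. Either route is routine, so I would simply cite the homomorphism property and conclude.
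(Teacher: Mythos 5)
Your argument is correct and coincides with the paper's own proof, which likewise applies the Lie algebra homomorphism property of the anchor to $[q,q]=0$ to conclude $[\rho(q),\rho(q)]=0$. The additional parity check you include is sound but routine.
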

\begin{proof}
This follows directly from the fact that { $q$ is homological section} and that the  anchor map is a homomorphism of Lie algebras, specifically
$$[\rho(q), \rho(q)] = \rho([q,q]) =0.$$
\end{proof}
Using \eqref{eqn:LocForBra}, locally the condition for a section $q = q^\alpha(x)t_\alpha$ to be homological can be written as
$$\left(q^\alpha Q_\alpha^a \frac{\partial q^\gamma}{\partial x^a}-\frac{1}{2}q^\alpha q^\beta Q^\gamma_{\beta \alpha}\right) t_\gamma=0.$$
Similarly, locally $Q_q = q^\alpha Q_\alpha^a \frac{\partial}{\partial x^a},$
and the fact that it `squares to zero' is expressed as
$$Q_\alpha^a\left( \frac{\partial q^\beta}{\partial x^a}Q_\beta^b + (-1)^{\widetilde{a}(\widetilde{\beta}+1)}\, q^\beta \frac{\partial Q_\beta^b}{\partial x^a} \right) =0.$$
\begin{remark}
 The homological vector field $Q_q \in \Vect(M)$ may be the zero vector field. This will be the case if  $M$ is a pure even manifold. One can easily see that the condition for a section to be homological becomes purely algebraic if the base is a pure even manifold.  Specifically, if we pick a local basis $\{t_l, \tau_\mu \}$ with $\widetilde{t} = 0$ and $\widetilde{\tau} =1$, then an odd section is locally of the form $q = q^\mu(x)\tau_\mu$. Then the homological condition is $q^\mu(x)q^\nu(x)Q^l_{\nu \mu}(x)=0$. Note that as $Q^l_{\nu \mu} = Q^l_{\mu\nu}$, in general, this condition is non-trivial.
 \end{remark}
\begin{example}[Zero Lie algebroids]
Consider a vector bundle $\pi : A \rightarrow M$ in the category of supermanifolds. If we equip this with the zero bracket and zero anchor then we have what we will refer to as a zero Lie algebroid. Any odd section of $A$ is by definition a homological section.
\end{example}
\begin{example}[Trivial homological sections]
Let $(A, [-,-], \rho)$ be a Lie algebroid. Then the zero section of $A$ is  homological. We refer to this as the trivial homological section. 
\end{example}
\begin{example}[Q-manifolds]
Any Q-manifold $(M,Q)$ can be considered as the { IQA} $\big(\sT M, [-,-], \Id_{\Vect(M)} , Q\big)$.
\end{example}
\begin{example}[Lie algebras with a nilpotent element]\label{exp:LieAlgQ}
A (finite-dimensional) Lie algebra $(\mathfrak{g}, [-,-]) $ can be considered as a Lie algebroid over a point.  By replacing the $\Z_2$-graded $\R$-vector space $\mathfrak{g}$ with the corresponding linear supermanifold, which we denote as $^ \textrm{Sm}\mathfrak{g}$, the vector bundle structure is 
$$\pi :~ ^ \textrm{Sm}\mathfrak{g} \rightarrow \{ \star\}.$$ 
We can identify $\Sec(^ \textrm{Sm}\mathfrak{g})$ with the original vector space $\mathfrak{g}$.  Thus, a nilpotent element is an odd section of $^\textrm{Sm}\mathfrak{g}$, or equivalently, an odd element of $\mathfrak{g}$ such that $[q,q]=0$. As a specific example, consider $\mathfrak{gl}(1|1)$ which has two even { basis} elements $N$ and $Z$, and two odd  { basis} elements $\psi^{\pm}$, with the non-zero Lie brackets being $[N, \psi^{\pm}] = \pm 2 \, \psi^\pm$ and $[\psi^+, \psi^-] = Z$. In particular, both $\psi^\pm$ are nilpotent, i.e., $[\psi^\pm, \psi^\pm]=0$.
\end{example}
\begin{remark}
Using the super-version of Lie III, Example \ref{exp:LieAlgQ}, corresponds to a Lie supergroup equipped with a left-invariant homological vector field. 
\end{remark}
The above examples are the four extremes. Note that non-trivial homological sections cannot exist in the category of `ordinary' non-super Lie algebroids.
\begin{example}[SUSY action Lie algebroid]\label{exa:SUSYActionAlg}
Consider the $N=1$ $d =1$ supertranslation algebra $\mathfrak{g}_{\textrm{S}}$ which, in terms of the  even and odd generators $t$ and $\tau$, we write as  $[t, \tau] =0$, $[\tau, \tau ] = t$, with all other brackets being zero. We define an action  $\rho : \mathfrak{g}_{\textrm{S}} \rightarrow \Vect(\R^{1|1})$ as 
\begin{align*}
& \rho(t) = \frac{\partial }{\partial x}, &&  \rho(\tau) = \frac{\partial}{\partial \theta} + \frac{1}{2} \theta \frac{\partial}{\partial x},
\end{align*}
where we have employed global coordinates $(x, \theta)$ on $\R^{1|1}$. The vector bundle structure is the trivial one $^{\textrm{Sm}}\mathfrak{g}_{\textrm{S}} \times \R^{1|1}\rightarrow \R^{1|1}$.  Sections of this vector bundle can be considered as maps from $\R^{1|1}$ to $\mathfrak{g}_{\textrm{S}}$, i.e., they are objects of the form $u = u_t(x, \theta)t + u_\tau(x, \theta)\tau$. The Lie algebroid bracket is constructed in the same way as the classical case. Specifically, 
$$[u,v] =(-1)^{\widetilde{v}+1}u_\tau v_\tau \, t + \big(\rho(u)v_t - (-1)^{\widetilde{u}\widetilde{v}} \, \rho(v)u_t \big)\, t + \big(\rho(u)v_\tau - (-1)^{\widetilde{u}\widetilde{v}} \, \rho(v)u_\tau \big)\, \tau\,.$$
Any odd section of the form $q = \theta q(x)\, t$ is homological. To see this observe that $[q,q]  = 2 \theta^2 q(x) \, q'(x)\, t =0$. The associated homological vector field is $Q_q = \theta \, q(x) \frac{\partial}{\partial x}$.
\end{example}
\begin{example}[Cotangent bundles]
Consider a manifold $M$ equipped with an even Riemannian or even symplectic structure. Then associated with these structures are the musical isomorphisms
\begin{align*}
& \flat : \Vect(M) \rightarrow \Omega^1(M), && \sharp : \Omega^1(M) \rightarrow \Vect(M), 
\end{align*}
where we are considering ``even'' one-forms, that is, the local basis $\delta x^a$ is taken to be of the same parity as the corresponding $\partial_a$. Then $\pi: \sT^*M \rightarrow M$ is a Lie algebroid where we define $\rho = \sharp$, and $[\alpha, \beta] = \flat \, [\, \sharp \alpha, \sharp \beta \, ]$.  A homological one-form corresponds to a homological vector field, i.e., $q = \flat Q$.
\end{example}
\begin{small}
\noindent \textbf{Aside.} The cousin of  `homological' is `supersymmetric'. Loosely, supersymmetry is an odd element of a Lie algebra that is the ``square root'' of some chosen even element. In particular, we can consider a Lie algebroid  equipped with an even and an odd section, $h$ and $q$, respectively, that satisfy  $[q,q] = h$. The vector fields on the base manifold also carry a representation of supersymmetry, i.e., $\rho(h) =:\mathcal{H}$ and $\rho(q) =: \mathcal{Q}$ satisfy the $d=1$ $N=1$ supertranslation algebra, that is, $[\mathcal{Q}, \mathcal{Q}] = \mathcal{H}$.  This is a generalisation of  the SUSY action Lie algebroid (see Example \ref{exa:SUSYActionAlg}).
\end{small}
\subsection{An associated bi-complex and Q-algebroids}\label{subsec:BiComp}
We will change our point of view slightly and consider $(\Pi A,Q)$ to be the `correct' starting definition of a Lie algebroid (see \cite{Vaintrob:1997}). Recall that we have an odd isomorphism
\begin{align*}
&\iota : ~  \Sec(A) \rightarrow \Vect_{-}(\Pi A)\\
 & u = u^\alpha(x)t_\alpha \mapsto \iota_u = (-1)^{\widetilde{u}}\, u^\alpha(x)\frac{\partial}{\partial \zx^\alpha},
\end{align*}
where $ \Vect_{-}(\Pi A)$ denotes the space of weight minus one vector fields on $\Pi A$. Note that this forms an abelian Lie algebra under the standard Lie bracket of vector fields.  The  anchor map and Lie algebroid bracket are recovered using the derived bracket formalism (see \cite{KosmannSchwarzbach:1996,Voronov:2005})
\begin{align*}
& \rho(u)f = [Q, \iota_u]f = (-1)^{\widetilde{u}} \, \iota_u(Qf),
&& \iota_{[u,v]} = (-1)^{\widetilde{u}} \, [[Q,\iota_u], \iota_v]. 
\end{align*}
 The Lie derivative of a section is defined via Cartan's magic formula, i.e., $L_u :=  [Q,\iota_u]$. Furthermore, a quick calculation shows that 
\begin{align*}
& [Q,L_u] = \frac{1}{2}[[Q,Q], \iota_u] = 0,\\
&[L_u, L_v] =   (-1)^{\widetilde{u}} \, [Q, [ [Q,\iota_u], \iota_v]] - (-1)^{\widetilde{u}}\, [[Q,L_u], \iota_v] = L_{[u,v]}.
\end{align*}
From the above, we observe that, for an { IQA} $(A,q)$,
$$Q_q(f) = [Q,\iota_q]f = - \iota_q(Qf).$$
Moreover, it is clear that
\begin{align}\label{egn:ComRel}
[Q,L_q] = 0, &&\textnormal{and}&& [L_q, L_q] = 0.
\end{align}
 Locally we have
\begin{align} \label{eq:Q}
& Q = \zx^\alpha Q^a_\alpha(x)\frac{\partial}{\partial x^a} + \frac{1}{2}\zx^\alpha \zx^\beta Q^\gamma_{\beta \alpha}(x)\frac{\partial}{\partial \zx^\gamma}\,,\\ \label{eq:Lq}
& L_q = q^\alpha(x) Q^a_\alpha(x) \frac{\partial}{\partial x^a} - \zx^\alpha \left( Q^a_\alpha(x) \frac{\partial q^\gamma}{\partial x^a}(x) - q^\beta(x)Q^\gamma_{\beta \alpha}(x) \right)\frac{\partial}{\partial \zx^\gamma}\,.
\end{align}
\begin{proposition}\label{prop:IQAgebroid}
Let $(A,q)$ be an { IQA}. Then $(\Pi A, Q, L_q)$ is a Q-algebroid, see Definition \ref{def:QAlgb}.
\end{proposition}
\begin{proof}
Comparing with Definition \ref{def:QAlgb}, it is clear from \eqref{egn:ComRel} that we have a Q-algebroid.
\end{proof}
\begin{remark}\label{rem:LinftyALg}
Clearly, as the pair of homological vector fields commute, $\widehat{Q}:=  L_q + Q \in \Vect(\Pi A)$ is itself a homological vector field, albeit inhomogeneous in weight. Thus, once a homological section has been chosen, we have an $L_\infty$-algebroid structure on $A$ concentrated in degrees zero and one (see \cite{Bruce:2011,Khudaverdian:2008}). 
\end{remark}
\begin{example}
Given  a Q-manifold $(M,Q)$, from the Cartan calculus, in particular $[\rmd, L_X] =0$ and $[L_X,L_Y]=L_{[X,Y]}$ for abrbitary vector fields $X$ and $Y \in \Vect(M)$,  it is clear that $(\Pi TM, \rmd, L_Q)$, is a Q-algebroid.
\end{example}
 The space of Lie algebroid forms we  define as $\Omega^\bullet(A) :=\mathcal{A}(\Pi A) = \bigoplus \mathcal{A}^p(\Pi A) $, i.e., fibre-wise polynomials on $\Pi A$, or in other words,  functions of homogeneous weight. This is essentially no different to defining differential forms on a manifold $M$ in terms of function on $\Pi \sT M$. Lie algebroid forms naturally form a $(\Z_2, \mathbb{N})$-graded algebra where the commutation rules are defined in via the Grassmann parity.  We will denote Lie algebroid $p$-forms of parity $i$ as $\Omega^p_i(A)$.  Note that $Q$ is weight one, while $L_q$ is of weight zero, but both have parity one. It is clear that $(\Omega^\bullet(A), Q, L_q)$ is a bi-differential algebra, i.e., a (super)commutative algebra equipped with two differentials that mutually (anti)commute.    We  have a bi-complex (or double complex) on Lie algebroid forms { as first defined by Mehta for general Q-algebroids} (see \cite{Mehta:2009})
 \begin{equation}\label{eqn:bicomp}
\leavevmode
\begin{xy}
(0,20)*+{ \Omega^0_{0/1}(A)}="a"; (30,20)*+{\Omega^1_{1/0}(A)}="b"; (60,20)*+{ \Omega^2_{0/1}(A)}="c"; (90,20)*+{\cdots}="d";
(0,0)*+{\Omega^0_{1/0}(A)}="e"; (30,0)*+{\Omega^1_{0/1}(A)}="f";  (60,0)*+{\Omega^2_{1/0}(A)}="g"; (90,0)*+{\cdots}="h";  %
{\ar "a";"b"}?*!/_3mm/{Q};%
{\ar "b";"c"}?*!/_3mm/{Q};%
{\ar "c";"d"}?*!/_3mm/{Q};%
{\ar "e";"f"}?*!/^3mm/{Q};%
{\ar "f";"g"}?*!/^3mm/{Q};%
{\ar "g";"h"}?*!/^3mm/{Q};%
{\ar@<1.ex>"a";"e"}; {\ar@<1.ex> "e";"a"};?*!/_5mm/{L_q};%
{\ar@<1.ex>"b";"f"}; {\ar@<1.ex> "f";"b"};?*!/_5mm/{L_q};%
{\ar@<1.ex>"c";"g"}; {\ar@<1.ex> "g";"c"};?*!/_5mm/{L_q};%
\end{xy}
\end{equation}
Note that $\Omega^0(A) = C^\infty(M)$ and that $L_q f = Q_q(f)$ for any function $f \in C^\infty(M)$ and so the first vertical complex  is the standard complex of $(M, Q_q)$. Note  that there are no  Lie algebroid top forms (unless we have a pure even Lie algebroid) and so the complex is unbounded.
\begin{example}\label{exa:SUSYActionAlg2}
Continuing Example \ref{exa:SUSYActionAlg}, the SUSY action Lie algebroid, as a Q-manifold, is given by $\big(\Pi(^{\textrm{Sm}}\mathfrak{g}_S) , Q  \big)$, where in global coordinates $(x, \theta, \zx, z)$, where $\widetilde{x} = \widetilde{z} =0$ and  $\widetilde{\theta} = \widetilde{\zx} =1$, the homological vector field is given by
$$Q = \left(\zx + \half z \theta\right)\frac{\partial}{\partial x} + z \frac{\partial}{\partial \theta} - \frac{1}{2}z^2 \frac{\partial}{\partial \zx}.$$
A quick calculation shows that $Q^2=0$. We argued that any odd section of the form $q = \theta \,q(x) \,t$ is homological. Using the odd isomorphism we have
$$\iota_q = - \theta \, q(x) \frac{\partial}{\partial \zx}\,,$$
and
$$L_q = \theta \, q(x) \frac{\partial}{\partial x} - \left( \zx \theta \,\frac{\partial q(x)}{\partial x} + z \, q(x) \right)\frac{\partial}{\partial \zx}\,.$$ 
We remark that odd p-forms that do not contain $\zx$, i.e., $\omega = z^p \theta \, \omega(x)$, this is well-defined as we only consider linear changes of fibre coordinates, are  $L_q$-closed. 
\end{example}
{
The \emph{total supercomplex} of an inner Q-algebroid is the supercomplex
\begin{equation*}
\leavevmode
\begin{xy}(0,40)*+{\Omega^\bullet_0(A)}="a"; (30,40)*+{\Omega^\bullet_1(A)}="b";%
{\ar@<1.ex>@/^1.pc/|{\widehat Q}"a";"b"};%
{\ar@<1.ex>@/^1.pc/|{\widehat Q} "b";"a"};
\end{xy}
\end{equation*}
with $\widehat Q :=  L_q + Q$. Note that $\widehat Q$ is inhomogeneous in form degree, but is Grassmann odd and so we speak of a supercomplex. 
\begin{remark}
Naturally, the total supercomplex can be defined for general Q-algebroids.
\end{remark}
}
{
\subsection{Triangular Lie bialgebroids}
If $A$ is a Lie algebroid, then $\Pi A^*$ comes equipped with a weight $-1$ Schouten bracket, which we will denote as $\SN{-,-}$. A  \emph{Poisson structure} $\mathcal{P}$ on $A$ is a Grassmann even weight $2$ function on $\Pi A^*$ such that $\SN{\mathcal{P}, \mathcal{P}} =0$. Note that $Q_\mathcal{P} := \SN{\mathcal{P},-}$ is a weight one homological vector field on $\Pi A^*$ and so $A^*$ is a Lie algebroid. The graded Jacobi identity for the Schouten bracket means that $Q_\mathcal{P}\SN{X,Y} = \SN{Q_\mathcal{P} X, Y} + (-1)^{\widetilde{X}+1}\, \SN{X,Q_\mathcal{P} Y}$ for all $X$ and $Y\in C^\infty(\Pi A^*)$. Thus, $(A,A^*)$ is a \emph{triangular Lie bialgebroid} (here in the super-setting) as defined by Mackenzie and Xu (see \cite{Mackenzie:1994} and Kosmann-Schwarzbach \cite{KosmannSchwarzbach:1995} for details). \par
For any vector bundle $A$, we have the \emph{odd isomorphism map} (as $C^\infty(M)$-modules) which identifies sections of $A$ with weight one functions on $\Pi A^*$ 
$$\varsigma : \Sec(A) \mapsto \mathcal{A}^1(\Pi A^*)\,.$$
Using local coordinates $(x^a, \eta_\alpha)$ on $\Pi A^*$ the identification is given by $s = s^\alpha(x)t_\alpha \mapsto \varsigma(s) := (-1)^{\widetilde{s}}\, s^{\alpha}(x) \eta_\alpha$. Note $\widetilde{\varsigma(q)} = \widetilde{s} +1$.  Let $(A, q)$ be an inner Q-algebroid, and define $\mathcal{Q}:= \varsigma(q)$. The homological condition $[q,q] =0$ becomes $\SN{\mathcal{Q}, \mathcal{Q}}=0$. It is clear that $\SN{\mathcal{Q},-}$ is a homological vector field on $\Pi A^*$, but it is weight zero and so does not define a Lie algebroid structure.\par 
However, we can define $\mathcal{P}_q := \half \mathcal{Q}^2$, which, due to the Leibniz rule for the Schouten bracket, is easily seen to be a Poisson structure on $A$. Indeed, $\SN{\mathcal{P}_q, \mathcal{P}_q} \propto\SN{\mathcal{Q}, \mathcal{Q}}\, \mathcal{Q}^2 =0$. Thus, we have established the following.
\begin{theorem}\label{thm:BiAlg}
Let $(A,q)$ be an inner Q-algebroid. Then the pair $(A,A^*)$ is canonically a triangular Lie bialgebroid.
\end{theorem}
\begin{remark}
The reader should note the similarity with decomposable Poisson structures on purely even manifolds, i.e., Poisson structures of the from $\pi = X\wedge Y$, where $X$ and $Y$ are vector fields. It is clear that, in this classical situation, we cannot set $X=Y$ and obtain a non-vanishing Poisson structure. The situation is vastly different for supermanifolds. 
\end{remark}
\begin{example}
Let $(M,Q)$ be a Q-manifold and define the odd principal symbol $\mathcal{Q}:= \varsigma(Q) \in \mathcal{A}^1(\Pi \sT^* M)$. Then $\mathcal{P}_Q := \half \mathcal{Q}^2$ is a Poisson structure on $M$ and thus $(\sT M, \sT M^*)$ is a triangular Lie bialgebroid.
\end{example}
\begin{remark}
The construction of a Poisson structure from a homological section generalises directly to higher Poisson structures on Lie algebroids by taking any smooth function of $\mathcal{Q}$ (see \cite{Bruce:2010} for higher Poisson/Schouten structures on Lie algebroids). For example, $\mathcal{P}:= \exp(\mathcal{Q}) = 1 + \mathcal{Q} + \frac{1}{2!} \mathcal{Q}^2 + \frac{1}{3!} \mathcal{Q}^3 +  \cdots$ is higher Poisson structure on $A$, i.e., an even function on $\Pi A^*$ such that $\SN{\mathcal{P}, \mathcal{P}} =0$. 
\end{remark}
}
\subsection{Derived odd Loday--Leibniz brackets}
We now turn our attention { to} the induced or derived odd Loday--Leibniz structure on the space of sections of an { IQA}. { First, we give the definition of an odd Loday--Leibniz algebra following Kosmann-Schwarzbach \cite{KosmannSchwarzbach:1996}.

\begin{definition}\label{def:OddLodLeiAlg}
An \emph{odd Loday-Leibniz algebra} is a $\Z_2$-graded $\R$-vector space $\mathfrak{g} = \mathfrak{g}_{0}\oplus \mathfrak{g}_{1}$, equipped with a $\R$-bilinear operation (the \emph{odd Loday-Leibniz bracket})
$$(-,-) : \mathfrak{g} \times \mathfrak{g} \longrightarrow \mathfrak{g},$$
that satisfies the following.
\begin{enumerate}
\setlength\itemsep{1em}
\item $\widetilde{(u,v)} =  \widetilde{u} + \widetilde{v} + 1$, and 
\item $(u, (v,w))= ((u,v), w)+ (-1)^{(\widetilde{u}+ 1)( \widetilde{v} + 1)} (v, (u, w))$.
\end{enumerate}
For all $u,v$ and $w \in \mathfrak{g}$. 
\end{definition}
Note that we do not insist on the odd Loday-Leibniz having any form of (graded) skewsymmetry upon interchange of the elements.\par
}
The existence of such a bracket is almost a  direct corollary of the definition of an { IQA}. 
\begin{proposition}\label{trm:OddLodLei}
Let $(A,q)$ be an { IQA}. Then $\Sec(A)$ comes with an odd Loday--Leibniz bracket.
\end{proposition}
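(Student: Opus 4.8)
The plan is to specialise the general derived-bracket construction recalled in the preliminaries to the differential Lie superalgebra furnished by Proposition~\ref{prop:DiffLieAlg}. Since that proposition shows $(\Sec(A), [-,-], \delta)$ with $\delta = [q,-]$ is a differential Lie superalgebra, I would simply set
$$(u,v) := (-1)^{\widetilde{u}}\,[\delta u, v] = (-1)^{\widetilde{u}}\,[[q,u],v]$$
for $u,v \in \Sec(A)$, and then verify that this bracket is odd and satisfies the (left) Loday--Leibniz identity.

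First I would record that the bracket is Grassmann odd: as $q$ is odd, $\delta$ raises parity by one, so $\delta u$ has parity $\widetilde{u}+1$ and hence $(u,v)$ has parity $\widetilde{u}+\widetilde{v}+1$. This is immediate from the definition and the fact that $[-,-]$ is parity-preserving.

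The substantive step is the Loday--Leibniz identity, which in its left form reads
$$(u,(v,w)) = ((u,v),w) + (-1)^{(\widetilde{u}+1)(\widetilde{v}+1)}\,(v,(u,w)).$$
I would verify this by substituting the definition of the derived bracket and reducing everything to the graded Jacobi identity for $[-,-]$ together with the two properties supplied by Proposition~\ref{prop:DiffLieAlg}, namely that $\delta$ is a derivation of $[-,-]$ and that $\delta^2 = 0$. Expanding $(u,(v,w))$, one uses the derivation property to push $\delta$ across the inner bracket and the graded Jacobi identity to reorganise the nested brackets; the terms then collect into $((u,v),w)$ and the $(v,(u,w))$ contribution, the only potentially obstructing term being one proportional to $[\delta^2 u, -]$.

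The main obstacle---and the only place where the hypothesis is genuinely used---is precisely the vanishing of that obstructing term, which holds because $\delta^2 = 0$, i.e.\ because $q$ is homological ($[q,q]=0$). Apart from this, the argument is a careful bookkeeping of Koszul signs; since the general statement that the derived bracket of a differential Lie superalgebra is an odd Loday--Leibniz bracket is due to Kosmann-Schwarzbach \cite{KosmannSchwarzbach:1996,KosmannSchwarzbach:2004}, the cleanest route is to invoke that result for the specific differential Lie superalgebra at hand rather than redo the full sign computation.
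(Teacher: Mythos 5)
Your proposal is correct and follows essentially the same route as the paper: define the derived bracket $(u,v)=(-1)^{\widetilde{u}}[[q,u],v]$ on the differential Lie algebra from Proposition~\ref{prop:DiffLieAlg} and verify (or cite Kosmann-Schwarzbach for) the odd Loday--Leibniz identity, with $\delta^2=0$ killing the only obstructing term. The paper additionally records the explicit $\delta$-exact anomaly to skewsymmetry, but that is used only later for the right Leibniz rule and is not needed for the statement itself.
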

\begin{proof}
This follows from  Proposition \ref{prop:DiffLieAlg} and the derived bracket formalism (see \cite{KosmannSchwarzbach:1996}). 
\end{proof}
Specifically, we define 
$$(u,v) = (-1)^{\widetilde{u}} \, [[q,u],v],$$
with $u$ and $v \in \Sec(A)$. A quick calculation shows that 
$$(u, (v,w)) = ((u,v),w) + (-1)^{(\widetilde{u}+1)( \widetilde{v}+1) } \, (v,(u,w)).$$
As $(\Sec(A), [-,-])$ is, in general, non-Abelian, the resulting odd brackets are not (graded) skewsymmetric and so we have an odd Loday--Leibniz bracket. The violation or `anomaly' to the skewsymmetry is easily seen to be given by 
\begin{equation}\label{eqn:AnomSym}
(u,v) = - (-1)^{(\widetilde{u} +1)(\widetilde{v}+1)}\, (v,u) + (-1)^{\widetilde{u}} \delta [u,v].
\end{equation}
\begin{remark}
Loday--Leibniz algebras were introduced  with the name `Leibniz algebras' by Loday (see \cite{Loday:1993}) as a noncommutative analogue of Lie algebras, i.e., the bracket is no longer skewsymmetric.  In the current context, we have a Grassmann odd version of a Loday--Leibniz algebra, i.e., $\big(\Sec_i(A), \Sec_j(A)\big) \subset \Sec_{i+j+1}(A)$. We use the notation `$(-,-)$' for the odd Loday--Leibniz bracket as this is  commonly  used in the physics literature to denote the antibracket as found in the BV-BRST formalism. The reader should also keep in mind the Schouten--Nijenhuis bracket of multivector fields.  It is possible to consider a symmetrised version of an odd derived bracket, but this non-longer satisfies the Jacobi identity. For details the reader should consult the original literature \cite{KosmannSchwarzbach:1996}. 
\end{remark}
{ As we are dealing with sections of a vector bundle, we have right and left Leibniz rules for the odd Loday--Leibniz bracket. }
{
\begin{proposition}\label{prop:leftLeibRule}
The left Leibniz rule for the odd Loday--Leibniz bracket is
$$(u, f v) = \rho_L(u)f \, v + (-1)^{(\widetilde{u}+1)\widetilde{f}}\, f (u,v),$$
where $  \rho_L(u) := - [\rho(u), Q_q] = (-1)^{\widetilde{u}}\, L_{Q_q}\rho(u)$. Moreover, the left odd anchor $\rho_L : \Sec(A) \rightarrow \Vect(M)$ satisfies
$$\rho_L (u,v) = (-1)^{\widetilde{u} + \widetilde{v}} \, [\rho(\delta u), \rho(\delta v)],$$
for all $u$ and $v \in \Sec(A)$ and $f \in C^\infty(M)$.\par 
The right Leibniz rule for the odd Loday--Leibniz bracket is
$$ (fu, v) = f \, (u,v) - (-1)^{(\widetilde{u} + \widetilde{f} +1)(\widetilde{v} +1)}\, \rho_L(v)f \, u 
 + (-1)^{\widetilde{u}} \, \delta \, [fu,v]. $$
\end{proposition}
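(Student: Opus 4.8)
The plan is to derive both identities by direct computation from the definition of the derived bracket, $(u,v) = (-1)^{\widetilde{u}}[\delta u, v]$ with $\delta = [q,-]$, using four ingredients already available: the Lie algebroid Leibniz rule of Definition~\ref{def:LieAlg}, the fact (noted just after that definition) that $\rho$ is a homomorphism of Lie algebras, the graded antisymmetry of the brackets, and the homological identity $\delta^2 = 0$ established in Proposition~\ref{prop:DiffLieAlg}.

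For the left Leibniz rule, first I would write $(u, fv) = (-1)^{\widetilde{u}}[\delta u, fv]$ and apply the algebroid Leibniz rule to $[\delta u, fv]$, bearing in mind that $\delta u$ has parity $\widetilde{u}+1$, so that the sign produced by moving $f$ through it is $(-1)^{(\widetilde{u}+1)\widetilde{f}}$. This yields two terms. The second is $(-1)^{\widetilde{u}}(-1)^{(\widetilde{u}+1)\widetilde{f}}\, f[\delta u, v]$, which, after recognising that $(-1)^{\widetilde{u}}[\delta u, v] = (u,v)$, is exactly $(-1)^{(\widetilde{u}+1)\widetilde{f}}\, f (u,v)$. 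The first term is $(-1)^{\widetilde{u}}\rho(\delta u)f\cdot v$, so it remains to check $(-1)^{\widetilde{u}}\rho(\delta u) = \rho_L(u)$. Here I use that $\rho$ is a homomorphism to write $\rho(\delta u) = \rho([q,u]) = [Q_q, \rho(u)]$, and then graded antisymmetry (with $Q_q$ odd) to obtain $[Q_q,\rho(u)] = -(-1)^{\widetilde{u}}[\rho(u),Q_q]$; multiplying by $(-1)^{\widetilde{u}}$ gives $-[\rho(u),Q_q] = \rho_L(u)$. The Lie-derivative form of $\rho_L$ is then immediate from $L_{Q_q}\rho(u) = [Q_q,\rho(u)]$.

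For the compatibility of the left odd anchor, the computation above incidentally records the useful identity $\rho_L(w) = (-1)^{\widetilde{w}}\rho(\delta w)$, valid for any section $w$. Applying it to $w=(u,v)$, whose parity is $\widetilde{u}+\widetilde{v}+1$, reduces the problem to computing $\delta(u,v)$. I would expand $\delta(u,v) = (-1)^{\widetilde{u}}[q,[\delta u, v]]$ by the Jacobi identity in the form $[q,[\delta u,v]] = [[q,\delta u],v] + (-1)^{\widetilde{u}+1}[\delta u,[q,v]]$; the first summand vanishes because $[q,\delta u]=\delta^2 u = 0$, leaving $\delta(u,v) = -[\delta u,\delta v]$. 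Applying $\rho$ (again a homomorphism) and collecting the sign $(-1)^{\widetilde{u}+\widetilde{v}+1}$ coming from $\rho_L$ against this minus sign yields the stated $\rho_L(u,v) = (-1)^{\widetilde{u}+\widetilde{v}}[\rho(\delta u),\rho(\delta v)]$.

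The content is elementary; the only genuine difficulty is sign bookkeeping, and I expect two spots to demand care: the parity shift $\widetilde{\delta u}=\widetilde{u}+1$ when pulling $f$ through the Leibniz rule, and the reconciliation of $\rho(\delta u)=[Q_q,\rho(u)]$ with the definition $\rho_L(u) = -[\rho(u),Q_q]$ via graded antisymmetry. Once these signs are fixed, both identities drop out, with $\delta^2=0$ and the Jacobi identity doing the remaining work in the second part.
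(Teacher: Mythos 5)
Your proposal is correct and follows essentially the same route as the paper: the left Leibniz rule is obtained by the identical expansion of $(-1)^{\widetilde{u}}[\delta u, fv]$ via the algebroid Leibniz rule (with the parity shift $\widetilde{\delta u}=\widetilde{u}+1$) and the identification $(-1)^{\widetilde{u}}\rho(\delta u)=-[\rho(u),Q_q]=\rho_L(u)$. For the second identity the paper carries out the Jacobi/nilpotency manipulation downstairs in $\Vect(M)$ using $[Q_q,Q_q]=0$, whereas you carry it out upstairs in $\Sec(A)$ via $\delta^2=0$ to obtain $\delta(u,v)=-[\delta u,\delta v]$ and then push forward with the anchor; the two computations are intertwined by the homomorphism $\rho$, and your sign bookkeeping checks out.
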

\begin{proof}
 Starting with the left Leibniz rule, a direct computations established the result.
\begin{align*}
(u, fv) &= (-1)^{\widetilde{u}} \, [\delta u , f v] = (-1)^{\widetilde{u}} \, \rho(\delta u)f \, v + (-1)^{\widetilde{u} + (\widetilde{u} +1)\widetilde{f}} \, f [\delta u, v]\\
& = (-1)^{\widetilde{u}}\, \rho\big( [q,u]\big)f \, v + (-1)^{ (\widetilde{u} +1)\widetilde{f}} \, f (u,v)\\
&= - [\rho(u), Q_q]f \, v + (-1)^{ (\widetilde{u} +1)\widetilde{f}} \, f (u,v).
\end{align*}
Similarly, a direct computation shows that
\begin{align*}
\rho_L(u,v) &= - [\rho(u,v), Q_q] = (-1)^{\widetilde{u}+1}\, [ [\rho(\delta u), \rho(v)], Q_q]\\
& = (-1)^{\widetilde{v} +1} \, [Q_q, [\rho(\delta u), \rho(v)]] = (-1)^{\widetilde{v} +1} \, [Q_q, [ [Q_q , \rho(u) ] , \rho(v) ]]\\
&=(-1)^{\widetilde{u} + \widetilde{v}} \, [[Q_q , \rho(u)] , [Q_p , \rho(v)]] = (-1)^{\widetilde{u} + \widetilde{v}} \, [\rho(\delta u), \rho(\delta v)].
\end{align*}
Moving to the right Leibniz rule, the proof follows directly using the `anomaly' to the symmetry of the odd Loday--Leibniz bracket \eqref{eqn:AnomSym}, i.e.,  the term $(-1)^{\widetilde{u}}\delta[u,v]$, and and the left Leibniz rule.  
\end{proof}}
As the `anomaly' to the symmetry of the odd Loday--Leibniz bracket is a $\delta$-exact term, it is unsurprising that the left and right Leibniz rules are similarly related. 
\begin{remark}
The odd Loday--Leibniz bracket structure on $\Sec(A)$ is a generalised version of a Loday  or Leibniz algebroid. There have been several non-equivalent definitions of a Loday algebroid in the literature, the reader can consult  \cite{Grabowski:2013,Jubin:2016} and references therein for details. The standard motivating example of a Loday algebroid is a Courant algebroid equipped with the Courant--Dorfman bracket (see \cite{Courant:1990,Dorfman:1987}). Often in literature one only requires that there be a left Leibniz rule in the definition of a Leibniz algebroid, see for example \cite{Ibanez:1999}. In our case, the lack of skewsymmetry is precisely controlled and so we can directly consider what happens with both the left and right Leibniz rules.
\end{remark}
\begin{definition}
Let $(A,q)$ be an { IQA}. A section $u \in \Sec(A)$ is said to be \emph{left central} with respect to the odd Loday--Leibniz bracket if $(u, v)=0$ for all $v \in \Sec(A)$.   Similarly, a section $u \in \Sec(A)$ is said to be \emph{right central} with respect to the odd Loday--Leibniz bracket if  $(v, u)=0$ for all $v \in \Sec(A)$. If a section is both left and right central with respect to the odd Loday--Leibniz bracket then it is said to be \emph{central} with respect to the odd Loday--Leibniz bracket.
\end{definition}
It is obvious that if a section $u \in \Sec(A)$ is central with respect to the Lie bracket, then it is central with respect to the odd Loday--Leibniz bracket. 
\begin{proposition}
Let $(A,q)$ be an { IQA}. The homological section $q$ is central with respect to the odd Loday--Leibniz bracket that it generates.
\end{proposition}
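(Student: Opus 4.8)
The plan is to verify the two defining conditions of centrality separately, and to observe that both collapse to the homological property $[q,q]=0$ (equivalently $\delta^2=0$). Recall that the odd Loday--Leibniz bracket is $(u,v) = (-1)^{\widetilde{u}}\,[\delta u, v] = (-1)^{\widetilde{u}}\,[[q,u],v]$, so I must show $(q,v)=0$ and $(v,q)=0$ for all $v \in \Sec(A)$.

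First I would dispose of left centrality, which is immediate. Setting $u=q$ gives $(q,v) = (-1)^{\widetilde{q}}\,[[q,q],v]$, and since $[q,q]=0$ by the definition of a homological section, this vanishes for every $v$. Equivalently, $\delta q = [q,q] = 0$, so $q$ lies in the kernel of the differential $\delta$ and the derived bracket $(q,-)$ is identically zero. Thus $q$ is left central.

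Next I would treat right centrality, i.e. $(v,q) = (-1)^{\widetilde{v}}\,[[q,v],q] = 0$. The key manoeuvre is to rewrite $[[q,v],q]$ using the graded antisymmetry of the Lie algebroid bracket. Since $q$ is odd and $\widetilde{[q,v]} = \widetilde{v}+1$, antisymmetry yields $[[q,v],q] = -(-1)^{(\widetilde{v}+1)\widetilde{q}}\,[q,[q,v]] = -(-1)^{\widetilde{v}+1}\,[q,[q,v]]$. But $[q,[q,v]] = \delta(\delta v) = \delta^2 v$, which vanishes because $\delta^2 = 0$, as already recorded in Proposition \ref{prop:DiffLieAlg}. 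Hence $[[q,v],q]=0$ and $(v,q)=0$ for all $v$, so $q$ is right central.

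Combining the two, $q$ is both left and right central, and therefore central with respect to the odd Loday--Leibniz bracket. I do not anticipate any genuine obstacle: the only point demanding care is the sign bookkeeping in the right-central case, specifically the use of $\widetilde{q}=1$ and $\widetilde{[q,v]} = \widetilde{v}+1$ when applying graded antisymmetry. Both halves of the argument are direct consequences of the single fact that $q$ is homological, which is exactly what makes the statement natural.
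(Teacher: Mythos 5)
Your proof is correct and follows essentially the same route as the paper: left centrality falls out of $[q,q]=0$ directly, and right centrality from graded antisymmetry together with the Jacobi identity (which you package as $\delta^2=0$ from Proposition \ref{prop:DiffLieAlg}, whereas the paper re-derives $(v,q)=-(v,q)$ explicitly). The sign bookkeeping in your right-central step checks out.
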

\begin{proof}
The left central property is obvious;  $(q,v) = - [[q,q],v]=0$ as $[q,q]=0$. The right central property follows from the skewsymmetry and Jacobi identity for the Lie bracket;
$$(v,q) = (-1)^{\widetilde{v}} \, [[q,v],q]  = [q,[q,v]]
 = [[q,q],v] - [q,[q,v] ] = - (-1)^{\widetilde{v}}[[q,v],q] = - (v,q).
$$
Thus, $(v,q)=0$ for any $v \in \Sec(A)$.
\end{proof}
\begin{example}
Let $(M,Q)$ be a Q-manifold, then we have the natural structure of an { IQA} on $\sT M$. The associated odd Loday--Leibniz bracket on $\Vect(M)$ is given by
$$(X,Y) = (-1)^{\widetilde{X}}\, [[Q,X],Y].$$
One can easily see that 
\begin{align*}
(X,fY) &= (-1)^{\widetilde{X}}\, L_Q X(f) \, Y + (-1)^{(\widetilde{X}+1)\widetilde{f}}\, f \, (X,Y),\\
(fX ,Y) & = f \, (X,Y) - (-1)^{(\widetilde{X} + \widetilde{f}+1)(\widetilde{Y}+1) + \widetilde{Y}}\, L_QY(f) \, X + (-1)^{\widetilde{X}}\, L_Q\big([fX, Y]\big).
\end{align*}
\end{example}
\begin{example}
As a more specific example of the previous example, consider a purely even Lie algebroid $A$. Then $(\Pi A, Q)$ is a Q-manifold and $(\sT \Pi A, Q)$ is an { IQA}. Lie algebroid (pseudo-)forms are functions on $\Pi A$, i.e., $\Omega^\bullet(A) = C^\infty(\Pi A)$, and in turn, $\Der(\Omega^\bullet(A)) \cong \Vect(\Pi A)$. Thus,  $\Der(\Omega^\bullet(A))$ comes canonically equipped with an odd Loday--Leibniz bracket. Restricting attention even further to the tangent bundle of a purely even manifold we see that the space of derivations of differential forms comes with  an odd Loday--Leibniz bracket generated by the de Rham differential. { If we restrict attention to weight $-1$ vector fields, then we are really dealing with the interior product of differential forms. In particular, the associated bracket, up to a shift in parity, is just the Lie bracket, $i_{ [X,Y]} =  [[\rmd, i_X],i_Y]$. The relation between the odd Loday--Leibniz bracket on differential forms and the Fr\"{o}licher--Nijenhuis bracket was made clear by Kosmann-Schwarzbach in \cite{KosmannSchwarzbach:2004} and we differ the reader there for details.}
\end{example}
{ Note that $\ker(\rho) \subset \Sec(A)$ is closed under the odd Loday--Leibniz. Moreover, the odd Loday--Leibniz restricted to $\ker(\rho)$ is graded skewsymmetric and the left and right anchors vanish, i.e., $(u,v) = -(-1)^{(\widetilde{u}+1)(\widetilde{v}+1)}\, (v,u)$ and $(u, f\, v) = (-1)^{(\widetilde{u}+1)\widetilde{f}} \, f \, (u,v)$, for all $u,v \in \Sec(A)$ and $f \in C^\infty(M)$. We can shift the parity of the sections\footnote{ i.e., $\Pi \Sec(A)_0 = \Sec(A)$ and $\Pi \Sec(A)_1 = \Sec(A)_0$ and the module structure is $f (\Pi u) = (-1)^{\widetilde{f}}\, \Pi (fu)$.} and then define a Lie bracket on $\Pi \ker(\rho)$ related to the odd Loday--Leibniz bracket via
$$\Pi \{x,y \} =  (\Pi x, \Pi y)\,$$
where $x$ and $y \in \Pi \ker(\rho)$. The graded symmetry is obvious. One can quickly check the Jacobi identity 
\begin{align*}
\Pi \{ x, \{y,z \}\} & = (\Pi x, (\Pi y, \Pi z)) = ((\Pi x ,\Pi y), \Pi z) + (-1)^{\widetilde{x} \widetilde{y}}\, (\Pi y , (\Pi x, \Pi z))\\
& \Pi \left( \{\{x,y\}z\} +(-1)^{\widetilde{x}\widetilde{y}}\, \{y , \{x,z \}\right)\,,
\end{align*}
for all $x,y$ and $z \in \Pi \ker(\rho)$. Moreover,  for any $f\in C^\infty(M)$ we have
$$\Pi \{x, fy \}  = (-1)^{\widetilde{f}}\,(\Pi x, f \Pi y) = (-1)^{\widetilde{f} (\widetilde{x}+1)}\, f \, (\Pi x, \Pi y) =  \Pi \left((-1)^{\widetilde{f}\widetilde{x}}\, f\, \{x,y \}\right)  \,.$$
Thus, we have established the following.
\begin{proposition}
Let $(A,q)$ be an inner Q-algebroid. Then $\mathfrak{g} :=  \big( \Pi \ker(\rho) , \{-,- \}\big)$ is a Lie algebra and $\big(C^\infty(M), \mathfrak{g}\big)$ is a Lie--Rinehart algebra with the action of $\mathfrak{g}$ on $C^\infty(M)$ is the trivial action, i.e., $\mathfrak{g} \rightarrow \Vect(M)$ is the zero map.
\end{proposition}
}
\subsection{Representations and differential modules}
Recall that an \emph{$A$-valued connection} on a vector bundle $\tau : E \rightarrow M$ is an even linear operator 
$$\nabla : \Sec(A) \times \Sec(E) \longrightarrow \Sec(E),$$
that satisfies the following;
\begin{enumerate}
\item $\nabla_{u+v}s = \nabla_u s + \nabla_v s$,
\item $\nabla_u(s + s') = \nabla_u s  + \nabla_u s'$,
\item $\nabla_{f u}s = f \nabla_u s$,
\item $\nabla_u fs = \rho(u)f \, s  + (-1)^{\widetilde{u} \widetilde{f}} f \nabla_u s$,
\end{enumerate}
for all $u$ and $v \in \Sec(A)$,  $s$ and $s' \in \Sec(E)$, and $f \in C^\infty(M)$. The \emph{curvature} of an $A$-value connection is 
$$R(u,v)s = [\nabla_u, \nabla_v]s - \nabla_{[u,v]}s.$$
\begin{remark}
 As far as we are aware, Lie algebroid connections were first introduced by Mackenzie \cite{Mackenzie:1987}. In the above definition we have, of course, generalised this slightly to the case of Lie superalgebroids. { Recall that a $A$-valued superconnection on $E$ (see Quillen \cite{Quillen:1985}) is an odd differential  operator $D$ on $\Omega^*(A)\otimes_{C^\infty(M)} \Sec(E)$ that satisfies $D(\omega \, s) = (Q\omega)\, s + (-1)^{\widetilde{\omega}} \, \omega \, (D s)$. Changing picture slightly, we identify $\Sec(E)$ with weight one functions on $E^*$. We can then consider a Quillen superconnection to be an odd vector field of bi-weight $(*,0)$ on the bi-graded supermanifold $\Pi A \times_M E^*$. In local coordinates $(x^a,\zx^\alpha, p_i)$, then
 $$D =  \zx^\alpha Q_\alpha^a(x)\frac{\partial}{\partial x^a} + \frac{1}{2}\zx^\alpha \zx^\beta Q_{\beta \alpha}^\gamma(x) \frac{\partial}{\partial \zx^\gamma} + \left(\mathbb{A}_i^{\,\,j}(x) + \zx^\alpha (\mathbb{A}_\alpha)_i^{\,\,j}(x) + \frac{1}{2}\zx^\alpha\zx^\beta (\mathbb{A}_{\beta\alpha})_i^{\,\,j}(x)+ \cdots \right )\, p_j \frac{\partial}{\partial p_i}\,.$$
 Restricting to Quillen superconnections of bi-weight $(1,0)$ we recover $A$-valued connections on $E$ as defined above.}
\end{remark}
A \emph{Lie algebroid representation} of $A$ on $E$ is a choice of flat $A$-valued connection, i.e., an $A$-valued connection such that $R(u,v)= 0$ for all $u$ and $v \in \Sec(A)$. Now suppose that $(A,q)$ is an { IQA} and that a Lie algebroid representation has been chosen.  Then we have a canonical odd endomorphism
$$\nabla_q : \Sec(E) \longrightarrow \Sec(E),$$
that satisfies 
\begin{align}\label{eqn:diffMod}
& \nabla_q fs = Q_q(f)\, s + (-1)^{\widetilde{f}}\, f \, \nabla_q s, && \nabla_q^2s = \half [\nabla_q,\nabla_q] s = \half \nabla_{[q,q]}s =0.
\end{align}
Thus, the sections of $E$ come with a differential that satisfies the natural generalisation of Leibniz rule. { The equations \eqref{eqn:diffMod} define the structure of a (super) differential  $C^\infty(M)$-module \footnote{For the definition of a differential (graded) module see \href{https://stacks.math.columbia.edu/tag/09JH}{Stacks-Project, Section 09JH}}.}\par 
 In other words, we have the following proposition.
\begin{proposition}\label{prop:EDiffMod}
Let $(A,q)$ be an { IQA}, $E$ be a vector bundle and let $\nabla$ be a Lie algebroid representation  of $A$ on $E$. Then $(\Sec(E), \, \nabla_q)$ is a differential $C^\infty(M)$-module. 
\end{proposition}
Thus, we have a supercomplex 
\begin{equation*}
\leavevmode
\begin{xy}(0,40)*+{\Sec_0(E)}="a"; (30,40)*+{\Sec_1(E)}="b";%
{\ar@<1.ex>@/^1.pc/|{\nabla_q}"a";"b"};%
{\ar@<1.ex>@/^1.pc/|{\nabla_q} "b";"a"};
\end{xy}
\end{equation*}
and an associated cohomology. Note that if we have a purely even Lie algebroid $A$, then $\nabla_q =0$ as the only homological section is the zero section. Similarly, if $E$ is purely even, then $\nabla_q$ is the zero map, i.e., sends all sections to the zero section. Thus, interesting cohomology groups can only exist if both $A$ and $E$ are both super. 
{
\begin{proposition}
Let $(A,q)$ be an IQA, $E$ be a vector bundle and let $\nabla$ be a Lie algebroid representation  of $A$ on $E$. Then 
$$\nabla_{\delta u}s =  [\nabla_q, \nabla_u]s\,$$
for all $u \in  \Sec(A)$ and $s \in \Sec(E)$.
\end{proposition}
\begin{proof}
This follows directly from $R(q,u)s=0$  for flat connections and $\delta u = [q,u]$. 
\end{proof}
\begin{corollary}
If $u \in \Sec(A)$ is $\delta$-closed, i.e., $\delta u =0$, then $[\nabla_q , \nabla_u] =0$.
\end{corollary}
}
\begin{example}
Let $(A,q)$ be an { IQA} over $M$ and assume that there exists a flat $A$-valued connection $\nabla$ on $\sT M$. Then $(\Vect(M), \nabla_q)$ is a  differential $C^\infty(M)$-module.
\end{example}
\begin{example}
Let $(E, \nabla)$ be a vector bundle over $M$ equipped with a flat linear connection.  Furthermore, if $(M,Q)$ is a Q-manifold, then $(\Sec(E) , \nabla_Q)$ is a differential $C^\infty(M)$-module. A little more specifically, if $E = \sT M$ and $\nabla$ is a flat affine connection then $(\Vect(\sT M) , \nabla_Q)$ is a  differential $C^\infty(M)$-module.
\end{example}
\begin{remark}
If the connection is not flat, then we have $\nabla_q^2 =\half R(q,q)$, where $R(q,q)$ is and even endomorphism of $\Sec(E)$. One can easily show that $\nabla_q^2 fs = f \,\nabla_q^2 s$. We then see that  $(\Sec(E), \, \nabla_q)$ is a  `curved differential $C^\infty(M)$-module'.
\end{remark}
If we consider an $A$-connection on $A$, for brevity we will say ``a connection on $A$'',  then we have the notion of the \emph{torsion} of a connection:
$$T(u,v) := \nabla_u v -(-1)^{\widetilde{u} \widetilde{v}} \, \nabla_v u -[u,v].$$
{
\begin{proposition}\label{prop:NabDel}
Let $(A,q)$ be an { IQA} equipped with a flat torsion-free connection $\nabla$, i.e., $T(u,v)=0$ for all $u$ and $v \in \Sec(A)$,  then
\begin{enumerate}
 \item $q$ is $\nabla_q$-closed,
 \item $\nabla_q = \delta$ provided $q$ is a parallel section, i.e., $\nabla_u q =0$ for all $u \in \Sec(A)$,\label{prop:NabDel2}
 \end{enumerate} 
\end{proposition}
\begin{proof}
Both of these follow directly from the definition of the torsion.
\end{proof}
\begin{remark}
Note that \eqref{prop:NabDel2} of Proposition \ref{prop:NabDel} does not require the connection to be flat.
\end{remark}
}
\begin{example}
Let $(M,Q)$ be a Q-manifold equipped with an affine connection (not necessarily flat).  Then $\nabla_Q Q =0$ if the affine connection is torsion-free. Furthermore, if the supermanifold is $n|2m$-dimensional or $n|n$-dimensional then it can be equipped with an even or odd Riemannian structure, respectively. In either case we  have the Levi-Civita connection, which just as in the classical setting is torsion-free {(see for example \cite{Monterde:1996})}.   
\end{example}

\subsection{Morphisms of  inner Q-algebroids}\label{subsec:Morph}
Let $(A, [-,-]_A, \rho_A)$ and $(B, [-,-]_B, \rho_B)$ be Lie algebroids. Recall that a vector bundle morphism 
 \begin{equation*}
\leavevmode
\begin{xy}
(0,20)*+{A}="a"; (20,20)*+{B}="b";%
(0,0)*+{M}="c"; (20,0)*+{N}="d";%
{\ar "a";"b"}?*!/_3mm/{\Phi};%
{\ar "c";"d"}?*!/^3mm/{\phi};%
{\ar "a";"c"}?*!/^3mm/{\pi_A};%
{\ar "b";"d"}?*!/_3mm/{\pi_B};%
\end{xy}
\end{equation*}
is a \emph{morphism of Lie algebroids} (see \cite{Higgins:1990}) if and only if $\Phi^\Pi : \Pi A \rightarrow \Pi B$ is a morphism of Q-manifolds (see \cite{Vaintrob:1996}), i.e., 
\begin{equation}\label{eqn:LieAlgMor}
Q_A \circ (\Phi^\Pi)^* = (\Phi^\Pi)^* \circ Q_B\,.
\end{equation}
In order to establish a generalisation of \eqref{eqn:LieAlgMor} for odd sections of a vector bundle we need to change picture slightly and consider $s\in \Sec_1(A)$ as an algebra morphism
$$(s^\Pi)^* : \cO_{\Pi A}(|\Pi A|) \longrightarrow \cO_M(|M|).$$
Note that the shift in parity is essential for odd sections. Locally we can write
$$(s^\Pi)^*(x^a, \zx^\alpha) = (x^a , s^\alpha(x)).$$
\begin{definition}\label{def:PhiRel}
Let $s \in \Sec_1(A)$ and $r \in \Sec_1(B)$ be two odd sections of the vector bundles $A$ and $B$, respectively. Furthermore, let  $\Phi: A \rightarrow B$ be a morphisms of vector bundles. Then $s$ and $r$ are said to be \emph{$\Phi$-related} if and only if
$$(s^\Pi)^* \circ (\Phi^\Pi)^* = \phi^* \circ(r^\Pi)^*\,.$$ 
\end{definition}
Locally we see that 
\begin{align*}
& (s^\Pi)^*\left((\Phi^\Pi)^*(y^i, \theta^\mu)\right) = \big(\phi^i(x), s^\alpha(x) \Phi_\alpha^{\,\, \mu}(x) \big),\\
& \phi^*\left( (r^\Pi)^* (y^i, \theta^\mu) \right) = \big(\phi^i(x), r^\mu(\phi(x)) \big),
\end{align*}
thus we require 
\begin{equation}\label{eqn:PhiRel}
s^\alpha(x)\Phi_\alpha^{\,\, \mu}(x) =r^\mu(\phi(x)).
\end{equation}
\begin{definition}\label{def:QLieMor}
Let $(A,  q_A)$ and $(B, q_B)$ be { IQAs}. Then a vector bundle morphism $\Phi : A \rightarrow B$ is a \emph{morphism of { inner Q-algebroids}} if and only if
\begin{enumerate}
\item $\Phi$ is a Lie algebroid morphism, and
\item $q_A$ and $q_B$ are $\Phi$-related.
\end{enumerate}
\end{definition}
Evidently, we can compose morphisms of { IQAs} and so we can define the category of { inner Q-algebroids} in the obvious way. The notion of an isomoprhism of { IQAs} is clear.\par 
We remind the reader that a morphism of Lie algebroids intertwines the respective anchor maps and that in local coordinates this condition is  expressed as 
\begin{equation}\label{eqn:AncMorph}
\rho_\alpha^a(x) \frac{\partial \phi^i(x)}{\partial x^a} = \Phi_\alpha^{\,\, \mu}(x) \rho_\mu^i(\phi(x)).
\end{equation}
As before we define $Q_M := \rho_A(q_A) = q_A^\alpha\rho^a_\alpha\frac{\partial}{\partial x^a}$  and $Q_N := \rho_B(q_B) = q_B^\mu\rho^i_\mu\frac{\partial}{\partial y^i}$.
\begin{proposition}\label{prop:Qmorph}
Let $(A, q_A)$ and $(B,q_B)$ be { IQAs}. Then if $\Phi: A \rightarrow B$ is a morphism of { inner Q-algebroids} (see Definition \ref{def:QLieMor}) then 
$$\phi :  (M,Q_M) \longrightarrow (N, Q_N),$$
is a morphisms of Q-manifolds, i.e.,
$$Q_M \circ \phi^* = \phi^* \circ Q_N\,. $$
\end{proposition}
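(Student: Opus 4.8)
The plan is to prove the statement by a direct local-coordinate computation, verifying the intertwining relation $Q_M \circ \phi^* = \phi^* \circ Q_N$ component by component using the three local identities already established in the excerpt: the $\Phi$-relatedness condition \eqref{eqn:PhiRel}, namely $s^\alpha \Phi_\alpha^{\,\,\mu} = r^\mu \circ \phi$ applied to $q_A, q_B$; the anchor-morphism condition \eqref{eqn:AncMorph}, namely $\rho_\alpha^a \,\partial_a \phi^i = \Phi_\alpha^{\,\,\mu}\,(\rho_\mu^i \circ \phi)$; and the explicit local forms $Q_M = q_A^\alpha \rho_\alpha^a \,\partial/\partial x^a$ and $Q_N = q_B^\mu \rho_\mu^i \,\partial/\partial y^i$.

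First I would observe that both $Q_M$ and $Q_N$ are vector fields (first-order differential operators), so it suffices to check the intertwining relation on coordinate functions $y^i$; agreement there extends to all functions by the Leibniz rule and chain rule. Evaluating the right-hand side gives $(\phi^* \circ Q_N)(y^i) = \phi^*\big(q_B^\mu \rho_\mu^i\big) = (q_B^\mu \circ \phi)(\rho_\mu^i \circ \phi)$, while the left-hand side gives $(Q_M \circ \phi^*)(y^i) = Q_M(\phi^i) = q_A^\alpha \rho_\alpha^a \,\partial_a \phi^i$. The proof then reduces to matching these two expressions.

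The key step is the substitution chain. Starting from the left-hand side, I would apply the anchor-morphism identity \eqref{eqn:AncMorph} to replace $\rho_\alpha^a\, \partial_a \phi^i$ by $\Phi_\alpha^{\,\,\mu}\,(\rho_\mu^i \circ \phi)$, obtaining $q_A^\alpha \Phi_\alpha^{\,\,\mu}\,(\rho_\mu^i \circ \phi)$. Then I would apply the $\Phi$-relatedness identity \eqref{eqn:PhiRel} for $q_A$ and $q_B$, namely $q_A^\alpha \Phi_\alpha^{\,\,\mu} = q_B^\mu \circ \phi$, to rewrite this as $(q_B^\mu \circ \phi)(\rho_\mu^i \circ \phi)$, which is exactly the right-hand side. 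Care must be taken with sign factors arising from the Grassmann parities when the indices are moved past one another, but since the relevant coefficient functions are multiplied in a fixed order dictated by the two cited local identities, the signs are precisely those already built into \eqref{eqn:PhiRel} and \eqref{eqn:AncMorph}, so no new signs are generated.

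I expect the main obstacle to be bookkeeping rather than conceptual: ensuring that the parity conventions in the two local identities \eqref{eqn:PhiRel} and \eqref{eqn:AncMorph} are compatible when chained together, and confirming that checking the relation on coordinate functions alone genuinely suffices (this relies on the fact that a vector field is determined by its action on a coordinate system and that $\phi^*$ is an algebra morphism, so both $Q_M \circ \phi^*$ and $\phi^* \circ Q_N$ are derivations $\cO_N \to \cO_M$ over $\phi^*$ that agree on generators). Once those two points are dispatched, the computation is a short two-line substitution and the proof concludes. A cleaner, coordinate-free alternative would be to note that $Q_M = \rho_A(q_A)$ and $Q_N = \rho_B(q_B)$ are the anchor images of $\Phi$-related sections, and that the anchor of a Lie algebroid morphism sends $\Phi$-related sections to $\phi$-related vector fields; this is essentially the geometric content of the local calculation, and I would use it to motivate the computation even if I carry out the verification in coordinates.
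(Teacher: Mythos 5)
Your proposal is correct and follows essentially the same route as the paper's proof: a local computation chaining the anchor-morphism identity \eqref{eqn:AncMorph} with the $\Phi$-relatedness condition \eqref{eqn:PhiRel} for $q_A$ and $q_B$. The paper simply applies the chain rule to an arbitrary $f\in C^\infty(N)$ rather than reducing to coordinate functions first, but the substitution chain is identical.
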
 
\begin{proof}
We will work locally and use \eqref{eqn:AncMorph} and \eqref{eqn:PhiRel}. Let $f \in C^\infty(N)$ be an arbitrary function. Then
\begin{align*}
Q_M(\phi^*f) & = q_A^\alpha \rho_\alpha^a \frac{\partial \phi^i}{\partial x^a} \, \phi^* \left( \frac{\partial f}{\partial y^i}\right)
= q_A^\alpha \Phi_\alpha^{\,\, \mu} \,  \phi^*\left(\rho_\mu^i \frac{\partial f}{\partial y^i} \right)\\
&= \phi^*\left(q_B^\mu \rho_\mu^i \frac{\partial f}{\partial y^i} \right) = \phi^* \big(Q_N(f) \big).
\end{align*}
\end{proof}
{
\begin{proposition}
Let $(A, q_A)$ and $(B,q_B)$ be  IQAs. Then if $\Phi: A \rightarrow B$ is a morphism of  inner Q-algebroids (see Definition \ref{def:QLieMor}) then
\begin{enumerate}
\item \label{prop:irelated}$\iota_{q_A} \circ (\Phi^\Pi)^* =  (\Phi^\Pi)^* \circ \iota_{q_B}$, and
\item \label{prop:Lrelated}$L_{q_A} \circ (\Phi^\Pi)^* =  (\Phi^\Pi)^* \circ L_{q_B}$.
\end{enumerate}
\end{proposition}
\begin{proof}
Directly in local coordinates we observe that
\begin{align*}
- q^\alpha_A(x) \frac{\partial}{\partial \zx^\alpha}\, (\Phi^\Pi)^* f(y, \theta)& = - q^\alpha_A (x) \Phi^\mu_\alpha (x) \, (\Phi^\Pi)^*\left( \frac{\partial f(y, \theta)}{\partial \theta^\mu}\right)\\
& = - (\Phi^\Pi)^* \left(q_B^\mu(y)\frac{\partial f(y, \theta)}{\partial \theta^\mu} \right) \,,
\end{align*}
thus \eqref{prop:irelated} is established. Part \eqref{prop:Lrelated} follows from the properties of $\Phi$-related vector fields and the Lie bracket. Specifically,   $[Q_A, \iota_{q_A}] \circ(\Phi^\Pi)^* = (\Phi^\Pi)^*\circ [Q_B, \iota_{q_B}]$ and so \eqref{prop:Lrelated} is established.
\end{proof}

Then we observe that morphisms of inner Q-algebroids give rise to morphisms of double Q-manifolds, i.e., the homological vector fields  of different weights defining a Q-algebroid are $\Phi$-related. 
}
For base preserving morphisms $\Phi: A \rightarrow B$, we have a $C^\infty(M)$-module homomorphism (using the same symbol) $\Phi : \Sec(A) \rightarrow \Sec(B)$ such that 
\begin{enumerate}
\item $\rho_A = \rho_B \circ \Phi$,
\item $\Phi [u,v]_A = [\Phi(u), \Phi(v)]_B$, and
\item $\Phi(q_A) = q_B$.  
\end{enumerate}
Naturally, we have $\rho_A(q_A) =\rho_B(q_B) = Q_M$.  { A quick calculation show that for base preserving morphisms of IQAs $\Phi \circ \delta_A =  \delta_B \circ \Phi$. }
\begin{proposition}
Let $(A, q_A)$ and $(B,q_B)$ be { IQAs} over the same base $M$. Suppose that $\Phi: A \rightarrow B$ is a morphism of { IQAs} (acting as the identity on $M$). Then 
\begin{enumerate}
\item $\Phi(u,v)_A = \big(\Phi(u), \Phi(v) \big)_B$, and
\item $\rho_{A,L}(u) = \rho_{B,L}\big(\Phi(u)\big)$, 
\end{enumerate}
for all $u$ and $v \in \Sec(A)$.
\end{proposition}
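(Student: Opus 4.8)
The plan is to reduce both identities to the three structural properties of a base-preserving morphism of QLAs recorded immediately above the statement, namely $\rho_A = \rho_B \circ \Phi$, the bracket-compatibility $\Phi[u,v]_A = [\Phi(u),\Phi(v)]_B$, and the normalisation $\Phi(q_A)=q_B$, together with two elementary observations: $\Phi : \Sec(A) \to \Sec(B)$ is an even $C^\infty(M)$-linear map, so it preserves Grassmann parity ($\widetilde{\Phi(u)} = \widetilde{u}$), and the two homological vector fields on the base agree, $\rho_A(q_A) = \rho_B(q_B) = Q_M$. Neither part requires any local computation; the work is purely formal manipulation of these axioms.

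For part (1), I would start from the definition $(u,v)_A = (-1)^{\widetilde{u}}\,[[q_A,u]_A,v]_A$ and apply $\Phi$. Invoking the bracket-compatibility axiom twice converts $\Phi\big([[q_A,u]_A,v]_A\big)$ into $[[\Phi(q_A),\Phi(u)]_B,\Phi(v)]_B$, and then $\Phi(q_A)=q_B$ collapses this to $[[q_B,\Phi(u)]_B,\Phi(v)]_B$. Since $\widetilde{\Phi(u)} = \widetilde{u}$, the surviving sign $(-1)^{\widetilde{u}}$ is precisely the prefactor appearing in $(\Phi(u),\Phi(v))_B = (-1)^{\widetilde{\Phi(u)}}[[q_B,\Phi(u)]_B,\Phi(v)]_B$, so the desired equality $\Phi(u,v)_A = (\Phi(u),\Phi(v))_B$ follows at once.

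For part (2), I would expand both sides using the definition of the left odd anchor $\rho_L(w) = -[\rho(w),Q_q]$ from Proposition \ref{prop:leftLeibRule}. On the $A$-side the relevant homological vector field is $Q_M = \rho_A(q_A)$, while on the $B$-side it is $\rho_B(q_B)$; these coincide by the remark preceding the statement. Thus $\rho_{A,L}(u) = -[\rho_A(u),Q_M]$ and $\rho_{B,L}(\Phi(u)) = -[\rho_B(\Phi(u)),Q_M]$, and the anchor-intertwining property $\rho_A = \rho_B\circ\Phi$ yields $\rho_A(u) = \rho_B(\Phi(u))$, so the two commutators are identical.

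There is no genuine obstacle here: both claims are direct consequences of the morphism axioms, and the only point requiring attention is the sign bookkeeping in (1), which is rendered trivial by the parity-preservation of $\Phi$. The single step I would verify carefully is the legitimacy of the common identity $Q_M = \rho_A(q_A) = \rho_B(q_B)$ used in (2), which holds for base-preserving morphisms by combining the normalisation axiom $\Phi(q_A)=q_B$ with $\rho_A = \rho_B\circ\Phi$.
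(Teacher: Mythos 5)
Your proof of part (1) is correct and coincides with the paper's argument: push $\Phi$ through the nested bracket using bracket-compatibility, replace $\Phi(q_A)$ by $q_B$, and note that parity preservation keeps the sign $(-1)^{\widetilde{u}}$ intact.

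For part (2) you take a genuinely different route. The paper does \emph{not} unwind the definition of the left odd anchor; instead it applies $\Phi$ to the left Leibniz rule for $(u,fv)_A$ (Proposition \ref{prop:leftLeibRule}), uses part (1) to cancel the $f\,(\,\cdot\,,\cdot\,)$ terms on both sides, and reads off $\rho_{A,L}(u)f = \rho_{B,L}(\Phi(u))f$ for all $f$. You argue directly from the formula $\rho_L(u) = -[\rho(u),Q_q]$: since $\rho_A = \rho_B\circ\Phi$ gives $\rho_A(u)=\rho_B(\Phi(u))$ and the normalisation $\Phi(q_A)=q_B$ combined with the same identity gives $\rho_A(q_A)=\rho_B(q_B)=Q_M$, the two commutators are literally the same vector field. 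Your version is shorter and arguably cleaner, as it avoids the implicit step in the paper's proof of extracting $\rho_{A,L}(u)=\rho_{B,L}(\Phi(u))$ from the equality of $\rho_{A,L}(u)f\,\Phi(v)$ and $\rho_{B,L}(\Phi(u))f\,\Phi(v)$ for all $f$ and $v$ (which requires that the module of sections separates such coefficients). The paper's route has the modest virtue of exhibiting part (2) as a formal consequence of part (1) plus the Leibniz rule alone, without appealing to the explicit commutator formula for $\rho_L$; but since that formula is established in Proposition \ref{prop:leftLeibRule} anyway, nothing is lost in your approach. Both are valid.
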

\begin{proof}
These follow as direct consequences of the definition of odd Loday--Leibniz brackets  as derived brackets.  Specifically,
\begin{enumerate}
\item $\Phi( u,v )_A = (-1)^{\widetilde{u}} \, \Phi \big( [[q_A, u]_A, v]_A\big) =  (-1)^{\widetilde{u}} \,  \big( [[\Phi(q_A), \Phi(u)]_B, \Phi(v)]_B\big)  = \big(\Phi(u), \Phi(v)  \big)_B$.
\item Considering $\Phi( u,fv )_A $ together with Proposition \ref{prop:leftLeibRule} and the above shows that 
$$\rho_{A,L}(u)f \, \Phi(v) \pm f \, \Phi(u,v)_A = \rho_{B,L}\big(\Phi(u)\big)f \pm f \, (\Phi(u), \Phi(v))_B,$$
\end{enumerate}
and so the desired result is established.
\end{proof}
\begin{remark}
Morphisms of Lie algebroids over different bases are not easy to describe in classical terms of brackets and anchors (see \cite[Section 4.3]{Mackenzie:2005}). The source of the problem is that one cannot pushforward sections and so compatibility of the brackets is a non-obvious statement. Due to this we will refrain from spelling out morphisms of {IQAs} over different bases in these terms.   
\end{remark}

\subsection{Modular classes of  inner Q-algebroids}
Recall that the \emph{modular class} of a Q-manifold $(M,Q)$ is the standard cohomology class of the divergence of the homological vector field \cite{Lyakhovich:2010,Lyakhovich:2004}, i.e.,
$$\textnormal{Mod}(Q) := [\textnormal{Div}_{\p} Q]_\textnormal{st}.$$
The modular class is independent of any chosen Berezin volume $\p$ as any other choice of Berezin volume leads to a divergence that differs by something $Q$-exact. The vanishing of the modular class is the necessary and sufficient condition for the existence of a $Q$-invariant Berezin volume. Q-manifolds with vanishing modular class are known as \emph{unimodular Q-manifolds}. The \emph{local characteristic representative} of the modular class is the local function 
\begin{equation}\label{eqn:LocRep}
\phi_Q(x) = \frac{\partial Q^a}{\partial x^a}.
\end{equation}
Note that this is not invariant under changes of local coordinates, only the divergence of $Q$ is truly invariant. However, as we are always dropping $Q$-exact terms, the local characteristic representative is meaningful. For a review of modular classes of Q-manifolds and further references the reader may consult \cite{Bruce:2017}.\par 
The modular class of a Lie algebroid (see \cite{Evens:1999}) can be understood/defined in precisely these terms - the earliest  reference that we are aware of where this is spelled out is Grabowski \cite{Grabowski:2012}. An { IQA}  $(A, q)$ can be considered as a Q-manifold in two different, but related, ways, either as $(\Pi A, Q)$ or $(\Pi A, L_q)$. We wish to now briefly examine the relation between the respective modular classes. \par 
{
\begin{remark}
Naturally, we have the notion of the modular classes of a general Q-algebroid.  Here we will only consider inner Q-algebroids.
\end{remark}
}
Using \eqref{eq:Q} and \eqref{eq:Lq} the local characteristic representatives are 
\begin{align*}
& \phi_Q(x, \zx) = \zx^\alpha \left((-1)^{\widetilde{a}(\widetilde{\alpha}+1)} \, \frac{\partial Q_\alpha^a}{\partial x^a}(x) + Q^\beta_{\alpha \beta}(x) \right),
&& \phi_q(x) = q^\alpha (x)\left((-1)^{\widetilde{a}(\widetilde{\alpha}+1)} \, \frac{\partial Q_\alpha^a}{\partial x^a}(x) + Q^\beta_{\alpha \beta}(x) \right).
\end{align*}
The above can then be written as
\begin{equation}\label{eq:LocCharRepRel}
\phi_q(x) = - \iota_q \phi_Q(x, \zx).
\end{equation}
\begin{proposition}\label{prop:ModClass}
Let $(A, q)$ be an { IQA}. Then if the modular class of $(\Pi A, Q)$ vanishes, then so does the modular class of $(\Pi A, L_q)$.
\end{proposition}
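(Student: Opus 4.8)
The plan is to produce a single Berezin volume on $\Pi A$ that is simultaneously invariant under $Q$ and under $L_q$; the vanishing of $\textnormal{Mod}(L_q)$ is then immediate. Recall that $\textnormal{Mod}(Q)=0$ is equivalent to the existence of a $Q$-invariant Berezin volume. First I would upgrade this to a \emph{weight-homogeneous} $Q$-invariant volume. Start from the coordinate Berezin volume $\p_0$, which is weight-homogeneous because the admissible changes of coordinates \eqref{eqn:CoordChan} are linear in the fibre variables and hence preserve weight. Since $Q$ has weight $+1$, the function $\textnormal{Div}_{\p_0}Q$ is homogeneous of weight $+1$; and because $\textnormal{Mod}(Q)=0$ we may write $\textnormal{Div}_{\p_0}Q = Q(g)$. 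Comparing weights (only the weight-zero part of a primitive can contribute to a weight-one, odd function) lets us take $g=g(x)$ of weight zero, so that $\p:=e^{-g}\p_0$ is again weight-homogeneous and satisfies $\textnormal{Div}_\p Q=0$.

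The key step is that this same $\p$ also kills $\iota_q$. Since $\iota_q\in\Vect_{-}(\Pi A)$ has weight $-1$ and $\p$ is weight-homogeneous, $\textnormal{Div}_\p\iota_q$ is a function of weight $-1$; as $C^\infty(\Pi A)$ is concentrated in non-negative weights this forces $\textnormal{Div}_\p\iota_q=0$. Now use the fact that the assignment $X\mapsto \textnormal{Div}_\p X$ turns brackets into
\[
\textnormal{Div}_\p[X,Y] = X(\textnormal{Div}_\p Y) - (-1)^{\widetilde{X}\widetilde{Y}}\,Y(\textnormal{Div}_\p X),
\]
(equivalently, Lie derivative along vector fields is a homomorphism, so a volume annihilated by both $Q$ and $\iota_q$ is annihilated by their bracket). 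Applying this to $X=Q$, $Y=\iota_q$ and invoking $L_q=[Q,\iota_q]$ gives
\[
\textnormal{Div}_\p L_q = Q(\textnormal{Div}_\p\iota_q) + \iota_q(\textnormal{Div}_\p Q) = 0,
\]
so $\p$ is $L_q$-invariant and $\textnormal{Mod}(L_q)=[\textnormal{Div}_\p L_q]_\textnormal{st}=0$.

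An equivalent, more hands-on route runs through the already established relation \eqref{eq:LocCharRepRel}. Writing $\phi_Q=Q(\psi)$ with $\psi$ of weight zero (admissible because $\phi_Q$ has weight one and $\textnormal{Mod}(Q)=0$), the identity $L_q=[Q,\iota_q]$ yields $\phi_q=-\iota_q\phi_Q=-\iota_q Q(\psi)=-L_q\psi+Q(\iota_q\psi)=-L_q\psi$, the last equality because $\iota_q\psi$ has weight $-1$ and therefore vanishes; hence $\phi_q$ is $L_q$-exact. The main obstacle in either approach is exactly the same point: one must force the relevant object ($\textnormal{Div}_\p\iota_q$, respectively $\iota_q\psi$) to vanish, which works only after arranging the data weight-homogeneously, so that the weight $-1$ produced by $\iota_q$ leaves no room for a nonzero answer. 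A secondary point requiring care is to keep the coordinate-dependent local representative $\phi_Q$ distinct from the globally defined divergence $\textnormal{Div}_\p Q$, since only the latter genuinely represents the cohomology class.
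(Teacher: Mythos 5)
Your proposal is correct, and its second, ``hands-on'' route is precisely the paper's argument made explicit: the paper's proof consists of the single remark that the claim is obvious from \eqref{eq:LocCharRepRel}, and what is left implicit there is exactly your weight bookkeeping, namely that a primitive of the weight-one function $\phi_Q$ may be taken of weight zero, whence $\iota_q\psi=0$ and $\phi_q=-\iota_q Q(\psi)=-L_q\psi$ is $L_q$-exact. Your first route is a mild but genuine variant: rather than working with the coordinate-dependent local representatives, you normalise a weight-homogeneous Berezin volume $\p$ so that $\textnormal{Div}_{\p}Q=0$, observe that $\textnormal{Div}_{\p}\iota_q=0$ for weight reasons, and conclude $\textnormal{Div}_{\p}L_q=0$ from $L_q=[Q,\iota_q]$ and the cocycle property of the divergence (a step that is robust to sign conventions since both inputs vanish). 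This buys you something the paper only states separately: it produces in one stroke the Berezin volume that is simultaneously $Q$- and $L_q$-invariant, i.e.\ it proves Corollary \ref{cor:InvBerVol} directly, and it works with globally defined objects rather than the local characteristic representatives (a point worth caring about, since $\phi_Q=Q(\psi)$ a priori only yields a local primitive $\psi$, whereas your $g$ is global). The only cosmetic caveat is that ``the coordinate Berezin volume'' is local; one should say instead that, since the transition Berezinians for \eqref{eqn:CoordChan} are weight zero, a partition of unity yields a global Berezin volume with weight-zero density, after which your argument goes through unchanged.
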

 \begin{proof}
This is obvious in light of \eqref{eq:LocCharRepRel}.
 \end{proof}
 \begin{corollary}\label{cor:InvBerVol}
 If the modular class of $(\Pi A, Q)$ vanishes, then there exists a Berezin volume on $\Pi A$ that is both $Q$-invariant and $L_q$-invariant.
 \end{corollary}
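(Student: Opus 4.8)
The plan is to produce a \emph{single} Berezin volume on $\Pi A$ that is annihilated by both $L_Q$ and $L_{L_q}$, rather than merely invoking Proposition \ref{prop:ModClass} (which only yields separately invariant volumes for the two Q-structures). Since the modular class of $(\Pi A, Q)$ vanishes, the standard theory of unimodular Q-manifolds furnishes a $Q$-invariant Berezin volume $\p$, i.e.\ $L_Q\p = 0$. The structural input I would exploit is $L_q = [Q,\iota_q]$: because the Lie derivative is a representation of the Lie algebra of vector fields on Berezin volumes, $L_{L_q}\p = [L_Q, L_{\iota_q}]\p = L_Q\big(L_{\iota_q}\p\big) + L_{\iota_q}\big(L_Q\p\big)$. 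The second summand already vanishes on a $Q$-invariant $\p$, so the whole problem collapses to controlling $L_{\iota_q}\p = (\textnormal{Div}_\p \iota_q)\,\p$.

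The decisive observation is the weight grading on $\Pi A$. The contraction $\iota_q$ carries weight $-1$, so for any \emph{weight-homogeneous} Berezin volume $\p$ the divergence $\textnormal{Div}_\p\iota_q$ is itself weight-homogeneous of weight $-1$. As there are no functions of negative weight on $\Pi A$, this forces $\textnormal{Div}_\p\iota_q = 0$, whence $L_{\iota_q}\p = 0$. Feeding this into the identity above gives $L_{L_q}\p = 0$, so any weight-homogeneous $Q$-invariant volume is automatically $L_q$-invariant. Equivalently, one may run the argument at divergence level through $\textnormal{Div}_\p L_q = Q(\textnormal{Div}_\p\iota_q) + \iota_q(\textnormal{Div}_\p Q)$ -- the divergence counterpart of \eqref{eq:LocCharRepRel} -- and note that both terms on the right vanish for such a $\p$.

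It therefore remains to guarantee that the $Q$-invariant volume can be chosen weight-homogeneous, and this is the step where I expect the only genuine work. Starting from the coordinate volume $\p_0$, equation \eqref{eq:Q} shows $\textnormal{Div}_{\p_0}Q = \phi_Q$ is purely of weight one, while the transition Berezinians are of weight zero, the fibre changes \eqref{eqn:CoordChan} being linear; hence the modular class is represented by a weight-one cocycle. Its vanishing means $\phi_Q = Q\sigma$, and extracting the weight-zero component of any primitive -- which by weight matching still solves the equation -- yields a (necessarily even) $\sigma \in C^\infty(M)$. Then $\p := \rme^{-\sigma}\,\p_0$ is weight-homogeneous with $\textnormal{Div}_\p Q = -Q\sigma + \phi_Q = 0$, so it is $Q$-invariant and, by the previous paragraph, simultaneously $L_q$-invariant, establishing the corollary. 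The main obstacle is precisely this homogeneity refinement: for a generic $Q$-invariant volume $\textnormal{Div}_\p\iota_q$ is only $Q$-closed rather than zero, so one must use the concentration of the modular class in weight one to pin down a good representative.
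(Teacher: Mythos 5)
Your proof is correct, and it in fact does more than the paper, which offers no argument for this corollary at all: it is presented as an immediate consequence of Proposition \ref{prop:ModClass}, i.e.\ of the relation \eqref{eq:LocCharRepRel} between local characteristic representatives. You correctly identify why that is not quite enough for the stated conclusion: for a generic $Q$-invariant volume $\p = D[x,\zx]\,\rho$ one finds $\textnormal{Div}_{\p}L_q = Q(\iota_q\log\rho)$, which is $Q$-exact but need not vanish, so one gets two separately invariant volumes rather than a single simultaneously invariant one. Your two extra inputs close exactly this gap: (i) the divergence cocycle identity applied to $L_q=[Q,\iota_q]$, which reduces everything to $\textnormal{Div}_{\p}\iota_q$; and (ii) the observation that the modular class of $(\Pi A,Q)$ is concentrated in weight one (the transition Berezinians being weight zero), so that the $Q$-invariant volume can be normalised to have weight-zero density $\rho(x)$, whence $\textnormal{Div}_{\p}\iota_q=0$ because $\iota_q$ only differentiates in the fibre directions. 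Two immaterial sign quibbles: with $Q$ odd and $\iota_q$ even, the bracket identities read $L_{[Q,\iota_q]}=L_QL_{\iota_q}-L_{\iota_q}L_Q$ and $\textnormal{Div}_{\p}[Q,\iota_q]=Q\,\textnormal{Div}_{\p}\iota_q-\iota_q\,\textnormal{Div}_{\p}Q$, but the terms whose signs differ from yours vanish anyway. What your route buys is a concrete recipe (a weight-homogeneous $Q$-invariant Berezin volume) together with an honest proof of the simultaneous invariance, and as a by-product a cleaner justification of Proposition \ref{prop:ModClass} itself.
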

 { Recall that $Q_q = \rho(q) =  q^\alpha Q_\alpha^a \frac{\partial}{\partial x^a}$, and thus 
 $$\phi_{Q_q}(x) =  \phi_q(x) - q^\alpha(x) Q_{\alpha \beta}^\beta(x) +\left( \frac{\partial q^\alpha(x)}{\partial x^a}\right) Q_\alpha^a(x)\,.$$
 }
 \begin{proposition}
 Let $(A,q)$ be an { IQA}. Then 
 \begin{enumerate}\label{eqn:Divs}
 \item $Q\big( \textnormal{Div}_{\p}L_q\big) +L_q\big( \textnormal{Div}_{\p}Q\big)=0$, and
 \item $Q(\phi_q) + L_q (\phi_Q) =0$, where we have not written the obvious restriction. 
 \end{enumerate}
 \end{proposition}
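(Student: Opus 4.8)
The plan is to derive both identities from a single standard fact about divergences of vector fields, combined with the commutation relation $[Q, L_q] = 0$ recorded in \eqref{egn:ComRel}. The key lemma I would establish (or quote) is the graded divergence--bracket identity
$$\textnormal{Div}_{\p}[X,Y] = X\big(\textnormal{Div}_{\p}Y\big) - (-1)^{\widetilde{X}\widetilde{Y}}\, Y\big(\textnormal{Div}_{\p}X\big),$$
valid for any $X, Y \in \Vect(\Pi A)$ and any Berezin volume $\p$. To prove it I would start from $L_{[X,Y]} = L_X L_Y - (-1)^{\widetilde{X}\widetilde{Y}} L_Y L_X$ as operators on densities, apply both sides to $\p$, use the defining relation $L_X \p = (\textnormal{Div}_\p X)\,\p$, and expand with the Leibniz rule $L_X(g\,\p) = X(g)\,\p + (-1)^{\widetilde{X}\widetilde{g}} g\, L_X \p$. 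The purely multiplicative terms $(\textnormal{Div}_\p X)(\textnormal{Div}_\p Y)$ then cancel after commuting the two functions past one another, leaving exactly the displayed identity.

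For part (1) I would simply set $X = Q$ and $Y = L_q$. Both are odd, so $(-1)^{\widetilde{X}\widetilde{Y}} = -1$ and the identity reads
$$\textnormal{Div}_{\p}[Q, L_q] = Q\big(\textnormal{Div}_{\p}L_q\big) + L_q\big(\textnormal{Div}_{\p}Q\big).$$
Since $[Q, L_q] = 0$ by \eqref{egn:ComRel} and the divergence of the zero field vanishes, the left-hand side is zero, and part (1) follows at once.

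For part (2) I would observe that, within a fixed coordinate chart, the local characteristic representatives $\phi_Q$ and $\phi_q$ are precisely the divergences of $Q$ and $L_q$ taken with respect to the coordinate Berezin volume (this is exactly how they were read off from \eqref{eq:Q} and \eqref{eq:Lq}). Because part (1) holds for every Berezin volume, specialising it to the coordinate one replaces $\textnormal{Div}_{\p}Q$ by $\phi_Q$ and $\textnormal{Div}_{\p}L_q$ by $\phi_q$, so part (1) becomes $Q(\phi_q) + L_q(\phi_Q) = 0$. This is the content of the ``obvious restriction'': the identity is read off chart by chart, the characteristic representatives being defined only locally.

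The main obstacle I anticipate is the bookkeeping of Koszul signs in the divergence--bracket identity: one must confirm both that the multiplicative terms genuinely cancel (using $d_Y d_X = (-1)^{\widetilde{X}\widetilde{Y}} d_X d_Y$ for functions) and that the sign $(-1)^{\widetilde{X}\widetilde{Y}}$ sits so that the two odd operators \emph{add} rather than subtract in the final combination. Everything after that is a direct substitution.
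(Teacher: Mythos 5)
Your proof is correct, but it reaches both parts by a different route than the paper. For (1) the paper polarises: it invokes Remark \ref{rem:LinftyALg} to note that $\widehat{Q}=Q+L_q$ is homological, expands $\widehat{Q}\big(\textnormal{Div}_{\p}\widehat{Q}\big)=0$ by bilinearity, and cancels the two diagonal terms $Q\big(\textnormal{Div}_{\p}Q\big)$ and $L_q\big(\textnormal{Div}_{\p}L_q\big)$, which vanish because each field is itself homological. You instead apply the divergence cocycle identity $\textnormal{Div}_{\p}[X,Y]=X\big(\textnormal{Div}_{\p}Y\big)-(-1)^{\widetilde{X}\widetilde{Y}}Y\big(\textnormal{Div}_{\p}X\big)$ directly to the pair $(Q,L_q)$ and use $[Q,L_q]=0$. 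These are two faces of the same fact --- the paper's step ``$\textnormal{Div}_{\p}$ of a homological field is closed under that field'' is exactly the diagonal case $X=Y$ of your identity --- so your argument is the off-diagonal version of theirs; it is marginally more self-contained since it does not need the auxiliary field $\widehat{Q}$, at the cost of proving (or citing) the cocycle identity with its Koszul signs, which you correctly flag and which does work out (the multiplicative terms cancel after the $(-1)^{\widetilde{X}\widetilde{Y}}$ from the graded commutator is absorbed). For (2) the paper expands $\textnormal{Div}_{\p}=\phi+(\cdot)\log\rho$ for a general $\p=D[x]\,\rho(x)$ and kills the cross terms with $[Q,L_q]\log\rho=0$, whereas you specialise (1) to the coordinate Berezin density chart by chart; your shortcut is legitimate precisely because the identity in (1) is purely local and holds for any (local) Berezin volume, which matches the paper's caveat about ``the obvious restriction''.
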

 \begin{proof}\
 \begin{enumerate}
 \item From Remark \ref{rem:LinftyALg}, $\widehat{Q} = L_q + Q$ is a homological vector field  and so $\textnormal{Div}_{\p} \widehat{Q}$ is $\widehat{Q}$-closed. Using the linear properties of the divergence we see that
 $$\widehat{Q} \big(\textnormal{Div}_{\p} \widehat{Q}\big)= L_q \big(\textnormal{Div}_{\p} L_q\big) +  L_q \big(\textnormal{Div}_{\p} Q\big) +  Q \big(\textnormal{Div}_{\p} L_q\big) +  Q \big(\textnormal{Div}_{\p} Q\big) =0.$$
 Then as $L_q \big(\textnormal{Div}_{\p} L_q\big) =   Q \big(\textnormal{Div}_{\p} Q\big) =0$, obtain the desired result.
 \item This follows from (1) using the local description of the divergence. We choose some Berezin volume $\p = D[x]\, \rho(x)$, where $D[x]$ is the coordinate Berezin density.  Explicitly 
$$Q\big(\phi_q  + L_q \log(\rho)\big) + L_q\big(\phi_Q  +Q \log(\rho)\big) = Q(\phi_q) + L_q(\phi_Q) + [Q,L_q](\log(\rho)) = 0.$$
Using the fact that $[Q,L_q]=0$ establishes the result.
\end{enumerate}
 \end{proof}
 {
Taking the relevant part of the double complex \eqref{eqn:bicomp}
$$ \Omega^0_1(A) \stackrel{Q}{\longrightarrow} \Omega^1_0(A) \stackrel{L_q}{\longleftarrow} \Omega^1_1(A)\,,$$
we observe that 
$$\textnormal{Div}_{\p} L_q \longmapsto Q\big(\textnormal{Div}_{\p} L_q \big) = - L_q\big(\textnormal{Div}_{\rho} Q \big) \longmapsfrom \textnormal{Div}_{\p} Q\;.$$
 }
 \subsection{Comments on almost Lie algebroids}
If one drops the Jacobi identity for the bracket on sections of an anchored vector bundle, but keep the skewsymmetry, Leibniz rule, and the compatibility of the anchor and bracket we have a so-called \emph{almost Lie algebroid} (see \cite{Grabowski:2011,Grabowski:1999}). We remark that almost Lie algebroids naturally appear in geometric mechanics. The notion of a homological section of an almost Lie algebroid is exactly the same as a homological section of a Lie algebroid, we can speak of an inner almost Q-algebroid (IAQA).  Due to the compatibility of the bracket and anchor, the base manifold is again a Q-manifold (see Proposition \ref{prop:Qman}). However, $[q,-]$ is no longer a differential.  Thus, the derived bracket formalism does not work here. We cannot construct a derived bracket on the space of sections of a IAQA as we did for an { IQA}.  In terms of Vaintrob's description, for almost Lie algebroids the homological condition is weakened to $Q^2f =0$, for all functions $f$ on the base manifold (see \cite[Proposition 2.9]{Bruce:2020}). Thus, one can speak of exact one-forms and the modular class of an almost Lie algebroid is well-defined.  In general, we have a quasi-bi-complex associated with a IAQA, rather than a genuine bi-complex.  Other than the already highlighted points, the rest of this paper generalises directly to IAQAs.\par
 If we further drop the compatibility of the bracket and anchor, we have a \emph{skew algebroid}. While the notion of a homological section here still makes sense, there is little structure to work with. For example, the base manifold need not be a Q-manifold. It seems that while the Jacobi identity can be relaxed and interesting ``homological structures'' can still be defined, relaxing the compatibility of the bracket and anchor is too harsh. 

\section{Closing remarks}
In this paper, we gave the notion of a homological section of a Lie (super)algebroid and examined some direct consequences thereof.  We again stress that rephrasing  algebro-geometric structures in terms of Q-manifolds has been very fruitful and is expected to continue to be so. Speculatively, we suggest that novel and interesting structures could be encoded in { inner Q-algebroids}, possibly equipped with  a further structure such as a compatible $\mathbb{N}$-grading. { Other generalisations naturally present themselves, such as multiple compatible (linearly independent) homological sections of a Lie algebroid and extending these concepts to double Lie algebroids.} This all awaits to be explored.  
\section*{Acknowledgements}
\noindent The author thanks Janusz Grabowski, Rajan Mehta, Damjan Pi\v{s}talo and Norbert Poncin for their valuable comments. The author especially thanks the two anonymous referees for their comments that have served to greatly improve this paper.

\end{document}